\def\T{{ \mathrm{\scriptscriptstyle T} }}
\def\din{d_{\mathrm{in}}}              
\newcommand{\lmin}{\lambda_{\mathrm{min}}}
\newcommand{\lmax}{\lambda_{\mathrm{max}}}
\newcommand{\hB}{\hat{B}}
\newcommand{\argmin}{\textrm{argmin}}
\newcommand{\RtoR}{\mathrm{R2R}}
\def\post{\pi_n}
\def\Pa{\mathrm{PA}}
\def\Pred{\mathrm{PRED}}
\def\Des{\mathrm{DES}}
\def\NonDes{\text{NON-DES}}
\def\cG{\mathcal{G}}
\def\cN{\mathcal{N}}
\def\bbD{\mathbb{D}}
\def\bbS{\mathbb{S}}
\def\bbR{\mathbb{R}}
\def\ind{\mathbbm{1}}   
\def\T{\mathrm{T}}   
\def\sX{\mathsf{X}}
\def\lupp{\bar{\ell}}
\def\llow{\underline{\ell}}
\def\diag{\mathrm{diag}}
\def\trace{\mathrm{tr}}
\def\det{\mathrm{det}}
\newcommand{\Var}{\mathrm{Var}}
\newtheorem{lemma}{Lemma}
\newtheorem{corollary}{Corollary}
\newtheorem{theorem}{Theorem}
\newtheorem{definition}{Definition}
\theoremstyle{definition}
\theoremstyle{remark}
\newtheorem{remark}{Remark}
\title{Identifiability of the minimum-trace directed acyclic graph and hill climbing algorithms without strict local optima under weakly increasing error variances}
\author{Hyunwoong Chang\thanks{These authors contributed equally. Corresponding to hwchang@utdallas.edu} \, and Jaehoan Kim$^\dagger$}
\date{}
\affil{$^{*}$Mathematical Sciences, University of Texas at Dallas\\$^{\dagger}$Statistical Science, Duke University}
\begin{document}
\maketitle

\begin{abstract}
We prove that the true underlying directed acyclic graph (DAG) in Gaussian linear structural equation models is identifiable as the minimum-trace DAG when the error variances are weakly increasing with respect to the true causal ordering. This result bridges two existing frameworks as it extends the identifiable cases within the minimum-trace DAG method and provides a principled interpretation of the algorithmic ordering search approach, revealing that its objective is actually to minimize the total residual sum of squares. On the computational side, we prove that the hill climbing algorithm with a random-to-random (R2R) neighborhood does not admit any strict local optima. Under standard settings, we confirm the result through extensive simulations, observing only a few weak local optima. Interestingly, algorithms using other neighborhoods of equal size exhibit suboptimal behavior,
having strict local optima and a substantial number of weak local optima.
\end{abstract}
\noindent
\textit{Keywords:}
\small
Bayesian network; Causal discovery; Directed acyclic graph; Hill climbing algorithm; Identifiability; Structural equation model.

\section{Introduction}
\normalsize

We consider the problem of structure learning for a directed acyclic graph (DAG). A fundamental challenge is that the true underlying DAG cannot be uniquely identified from observational data alone, since multiple DAGs can encode the same set of conditional independencies~\citep{koller2009probabilistic}. This identifiability issue  causes computational inefficiency~\citep{chickering2002optimal} and complicates interpretation after estimation. To address this problem, a substantial body of research introduces additional assumptions on the underlying distributions that make the true DAG identifiable~\citep{shimizu2006linear, hoyer2008nonlinear, peters2011identifiability}.

In particular, \cite{peters2014identifiability} prove that the true DAG is identifiable in Gaussian linear structural models when all error variances are equal, which has inspired two lines of research. One line of work leverages the observation that 
source variables always have the smallest marginal variance~\citep{chen2019causal}, or that leaf variables always have the largest conditional variance given other variables~\citep{ghoshal2018learning}. These methods identify the true node ordering by finding and removing a source or leaf variable, then recursively applying the procedure to the induced subgraph. \citet{park2020identifiability} show that the true DAG can be identified by these methods in extended settings where error variances weakly increase in the true causal ordering of variables. 
Although these methods are efficient, they lack probabilistic interpretability, making it challenging to compare the likelihood of different orderings and limiting their extension to probabilistic frameworks such as Bayesian models. 
On the other hand, leveraging the observation that the sum of error variances is minimized at the true DAG under the equal error variance, \cite{aragam2019globally} propose the minimum-trace DAG, which minimizes the total residual sum of squares across all variables. Interestingly, a Bayesian formulation with a prior enforcing equal error variances yields a posterior distribution that corresponds to an objective function that targets minimum-trace DAGs~\citep{chang2024order}.
However, the identifiability issue remains, as multiple minimum-trace DAGs may exist in general settings. Prior to this work,  the equal error variance condition was the only known identifiable case within the minimum-trace DAG framework.

This paper reconciles these two aspects by proving that the true DAG is identifiable as the minimum-trace DAG when error variances are weakly increasing with respect to the true causal ordering. This result is nontrivial, especially given that the equal error variance case corresponds to a measure-zero subset under the Lebesgue measure over the parameter space~\citep{spirtes2000causation}, whereas the weakly increasing variance condition holds on a set of positive measure.
Additionally, the assumption of weakly increasing error variances may be reasonable in practice, as upstream variables often play more foundational roles, while descendant variables tend to reflect accumulated uncertainty.

From a computational perspective, the minimum-trace DAG method naturally adopts order-based algorithms, which search over the ordering space to find an optimal ordering that maximizes a given on objective functions~\citep{teyssier2005ordering, scanagatta2017improved}. 
While significantly smaller than the space of  DAGs, the space of orderings (i.e., the permutation space) still has $p!$ elements for $p$ variables, posing substantial computational challenges. The simplest among these algorithms is the hill climbing algorithm, a greedy local search method that iteratively moves to a better solution within a predefined local neighborhood, terminating when no further improvement is possible. This is also closely related to the
Metropolis-Hastings algorithms for Bayesian models as they require predefined local neighborhoods to construct proposal distributions~\citep{friedman2003being, agrawal2018minimal, chang2024order}. In general, the choice of the local neighborhood significantly influences the efficiency of the algorithms. 
If the neighborhood is too small, it is more likely to get trapped in local optima, whereas an excessively large neighborhood incurs substantial computational cost at each iteration. 
Despite its importance, the choice of local neighborhood structure is largely heuristic and lacks theoretical justification. 

We prove that the hill climbing algorithm with a random-to-random (R2R) neighborhood does not admit any strict local optima. 
Under simulation settings commonly used in the literature, no strict local optima and only four occurrences of weak local optima were observed across 10,000 simulation replications, suggesting that 
the result holds in practice.
More intriguingly, we compare the R2R neighborhood with two other neighborhoods of the same size, namely, random transposition (RTS) and reversed random-to-random (R2R-REV), which exhibit some strict local optima and numerous weak local optima. This result suggests that the R2R neighborhood is arguably an optimal choice, indicating that the optimality of a neighborhood can depend not just on its size but also on its scheme. This perspective alone might be of independent interest, potentially alluding to open problems in combinatorics. 

\section{Model identifiability under weakly increasing error variance}

\subsection{Preliminary}

\indent A DAG $G$ is a pair $(V, E)$ where $V$ is the vertex set and $E \subset V \times V$ is the set of directed edges. Throughout the paper, we assume $V = [p] = \{1, \dots, p \}$ for DAG models, used to index random variables $\sX_1, \ldots, \sX_p$. We assume that $E$ adheres to the structure of a DAG, meaning it contains no undirected edges or cycles. We use the notation $i \rightarrow j \in G$ to mean that $(i, j) \in E$. Let $|G|$ denote the number of edges in the DAG $G$, that is, $|G| = |E|$. Given a directed acyclic graph $G$, we denote the parents of a node $j$ by $\mathrm{PA}_j(G)$. 
An ordering $\sigma$ is
a permutation $\sigma : [p] \to [p]$, where $\sigma$ yields a causal ordering for $G$ if and only if, for any indices $i < j$, $\sigma(i)$ is not a descendant of $\sigma(j)$ in $G$. 
We denote \( \mathbb{S}^p \) as the set of all orderings of \( p \) indices. 
Let $\Pred_j(\sigma) = \{\sigma(k): k < \sigma^{-1}(j) \}$ denote the set of predecessors of node $j$ under the ordering $\sigma$,  that is, all nodes that appear before $j$ in $\sigma$. 
Let $\mathcal{G}(\sigma)$ denote the collection of all DAGs are consistent with an ordering $\sigma$; that is, $\mathcal{G}(\sigma) = \{G: \sigma \text{ is a causal ordering for  } G \}$. We denote the univariate normal distribution and \(d\)-dimensional multivariate normal distribution by \(\mathrm{N}\)  and \(\mathrm{MVN}_d\), respectively.


We consider a structural equation model with the true causal DAG $G^*$, 
where the data are generated as follows.
\begin{align}\label{eq:str_eq_individual}
    \sX_j = \sum_{i \in \mathrm{PA}_j(G^*)} B^*_{ij} \sX_i + \epsilon_j,
\end{align}
where an error term $\epsilon_j \sim \mathrm{N}(0, \omega^*_j)$. We refer to the coefficient matrix $B^* = (B^*_{ij}) \in \bbR^{p \times p} $ as the weighted adjacency matrix, as each nonzero entry $B^*_{ij}$  indicates the presence of a directed edge $i \rightarrow j$ in $G^*$.
Let $[\sigma^*] = \{\sigma \in \bbS^*:  G^* \in \cG(\sigma)\}$ denote the set of all orderings consistent with $G^*$, where $\sigma^*$ is an element in $[\sigma^*]$.
Let $\sX =(\sX_1, \dots, \sX_p)^\T$ be a vector of $p$ random variables, and express~\eqref{eq:str_eq_individual} in matrix form as $\sX = (B^*)^\T \sX + \epsilon$, where $\epsilon \sim \mathrm{MVN}_p(0, \Omega^*)$ with $\Omega^* = \diag (\omega^*_1, \dots, \omega^*_p)$. Under the form, one can readily verify that $\sX \sim \mathrm{MVN}_p(0, \Sigma^*)$ with $\Sigma^* = \Sigma(B^*, \Omega^*)$, where the function $\Sigma$ is given by
\begin{align}\label{eq:MCD}
    \Sigma(B, \Omega) = (I-B^\T)^{-1}\Omega(I-B)^{-1},
\end{align}
which is commonly referred to as the modified Cholesky decomposition. The decomposition is not unique, that is, there may exist multiple pairs \( (B', \Omega') \) such that \( \Sigma^* = \Sigma(B', \Omega') \). 
More precisely, for each ordering \( \sigma \in \mathbb{S}^p \), there exists a unique pair \( (B^*_\sigma, \Omega^*_\sigma) \) such that (i) \( B^*_\sigma \) is consistent with \( \sigma \), that is, \( |(B^*_\sigma)_{ij}| > 0 \) only if \( \sigma^{-1}(i) < \sigma^{-1}(j) \), and (ii) $\Sigma^* = \Sigma(B^*_\sigma, \Omega^*_\sigma)$~\citep[Lemma~C4]{zhou2021complexity}. Let 
$\mathcal{D}(\Sigma^*):=  \{(B^*_\sigma, \Omega^*_\sigma):  \Sigma^* = \Sigma (B^*_\sigma, \Omega^*_\sigma) \text{ for } \sigma \in \bbS^p\}$ be the collection of all such pairs. 
Let $G^*_\sigma$ denote the corresponding DAG of the weighted adjacency matrix $B^*_\sigma$, whose edge set is $\{(i,j): (B^*_\sigma)_{ij} \neq 0 \}$. We define $\Omega^*_\sigma = \diag(\omega_1^\sigma, \dots, \omega_p^\sigma$), and interpret $\omega_j^\sigma = \Var(X_j \mid X_{\Pred_{j}(\sigma)}) $ as the error variance of $\sX_j$ given that the underlying DAG is $G^*_\sigma$.


\subsection{The minimum-trace condition under weakly increasing error variance}

If we correctly identify the true pair $(B^*, \Omega^*)$ among the elements of $\mathcal{D}(\Sigma^*)$, we can reconstruct the true causal graph $G^*$ from $B^*$. One special case in which the true pair is identifiable occurs when the error variances are equal~\citep{peters2014identifiability}, that is, $\Omega^* = \omega^*I$ for some $\omega^* > 0$. One can readily see, by the inequality of arithmetic and geometric means, $\trace (\Omega^*_\sigma) \geq p \{\det(\Omega_\sigma^*)\}^{1/p} = p \omega^* = \trace(\Omega^*)$ for $\sigma \in \bbS^*$, and the equality only holds for $\sigma \in [\sigma^*]$. 
\cite{aragam2019globally} generalize the condition and 
propose computing a DAG 
that minimizes $\trace (\Omega_\sigma^*)$ over all $\Omega_\sigma^*$ such that $(B_\sigma^*, \Omega_\sigma^*) \in \mathcal{D}(\Sigma^*)$. We refer to such a DAG as the minimum-trace DAG.
\begin{definition}[Minimum-trace DAG]
    For a covariance matrix $\Sigma^*$, a minimum-trace 
    permutation is any permutation $\tau \in \arg\min_{\sigma \in \bbS^p} \trace (\Omega_\sigma^*)$, where $(B_\sigma^*, \Omega_\sigma^*) \in \mathcal{D}(\Sigma^*)$. The corresponding minimum-trace DAG is defined as $G^*_\tau$.
\end{definition}
One plausible justification for the minimum-trace DAG is that the model favors DAGs which minimize the total error variance $\sum_{j=1}^p \omega^\sigma_j$. This quantity corresponds to the total residual sum of squares, as estimated from the data. However, as multiple minimum-trace DAGs may exist in general, the issue of identifiability issue remains. As far as is known, the only explicit case that ensures identifiability is when the error variances are assumed to be equal. We extend identifiability under the minimum-trace objective to the case of weakly increasing variance cases. Specifically, the error variances are weakly increasing  with respect to $\sigma^* \in \bbS^p$, if $\omega^{\sigma^*}_{\sigma^*(1)}\leq \omega^{\sigma^*}_{\sigma^*(2)} \leq \dots \leq \omega^{\sigma^*}_{\sigma^*(p)}$.

\begin{theorem}\label{thm:increasing}
Consider the model~\eqref{eq:str_eq_individual} with the true causal ordering $\sigma^* \in \bbS^p$ and the true DAG $G^*$. Suppose that the error variances are weakly increasing with respect to $\sigma^*$. Then, for any $\sigma \in [\sigma^*]$, $\sigma$ is a minimum-trace permutation, and the corresponding DAG $G^*_{\sigma}$ is the unique minimum-trace DAG, with $G^*_{\sigma} = G^*$.
\end{theorem}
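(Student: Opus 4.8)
The plan is to recast the claim as the assertion that the functional $T(\sigma):=\trace(\Omega^*_\sigma)=\sum_{k=1}^{p}\Var\big(X_{\sigma(k)}\mid X_{\sigma(1)},\dots,X_{\sigma(k-1)}\big)$ over $\bbS^p$ attains its minimum exactly on $[\sigma^*]$, where its value is $\sum_{j}\omega^*_j$. Two preliminary remarks set the stage. First, for $\sigma\in[\sigma^*]$ the unique pair in $\mathcal{D}(\Sigma^*)$ consistent with $\sigma$ is $(B^*,\Omega^*)$, so $T(\sigma)=\sum_j\omega^*_j$ and $G^*_\sigma=G^*$; it therefore suffices to prove $T(\sigma)>\sum_j\omega^*_j$ whenever $\sigma\notin[\sigma^*]$, i.e.\ whenever $\sigma$ has an \emph{ancestral inversion} --- a pair $(i,j)$ with $i$ a descendant of $j$ in $G^*$ but $i$ preceding $j$ in $\sigma$. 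Second, by the chain rule for Gaussian conditional variances together with $\det(I-B^*_\sigma)=1$, every $\sigma$ satisfies $\prod_{k=1}^{p}\omega^\sigma_{\sigma(k)}=\det\Sigma^*$; so all competitors share a common geometric mean, and the equal-variance case quoted in the text is exactly the situation in which AM--GM alone pins down the minimizer. The task here is to substitute the monotone structure for that symmetry.

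I would argue by induction on $p$, peeling off the true source $u:=\sigma^*(1)$, which under the weakly increasing hypothesis has the globally smallest noise variance $\omega^*_u=\min_j\omega^*_j$. Conditioning on $X_u$ turns the law of $X_{[p]\setminus u}$ into the Gaussian linear SEM on $G^*\setminus u$, whose error variances remain weakly increasing along the induced order $\sigma^*|_{-u}$, so the induction hypothesis applies. Writing $r:=\sigma^{-1}(u)$ and splitting $T(\sigma)$ at position $r$: the summands with index $>r$ already condition on $X_u$ and hence coincide term by term with those of the conditional problem; the summands with index $<r$ only decrease when one additionally conditions on $X_u$; and position $r$ contributes $\Var(X_u\mid X_{\Pred_u(\sigma)})$. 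Consequently $T(\sigma)$ equals the trace of the conditional problem plus $\Var(X_u\mid X_{\Pred_u(\sigma)})$ plus the total variance reduction that $X_u$ induces on $X_{\sigma(1)},\dots,X_{\sigma(r-1)}$. Since the conditional trace is at least $\sum_{j\neq u}\omega^*_j$ (with equality only when $\sigma|_{-u}$ is a causal order of $G^*\setminus u$), the whole theorem reduces to the inequality
\[
\Var(X_u)-\Var\big(X_u\mid X_{\Pred_u(\sigma)}\big)\ \le\ \sum_{k<r}\Big[\,\Var\big(X_{\sigma(k)}\mid X_{\sigma(1)},\dots,X_{\sigma(k-1)}\big)-\Var\big(X_{\sigma(k)}\mid X_u,X_{\sigma(1)},\dots,X_{\sigma(k-1)}\big)\,\Big],
\]
i.e.\ the information that $\{X_{\sigma(k)}\}_{k<r}$ carry about $X_u$ is at most the information that $X_u$ carries about these coordinates when it is fed in before them. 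Equivalently: within the subvector indexed by $\{u\}\cup\Pred_u(\sigma)$, the trace with $u$ in front is no larger than the trace with $u$ at the end.

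To prove this I would rely on the elementary perturbation identity for an adjacent transposition: exchanging positions $k,k+1$ of an ordering from $(a,b)$ to $(b,a)$, with $S$ the shared predecessor set, changes $T$ by $\mathrm{Cov}\big(X_a,X_b\mid X_S\big)^2\big(\Var(X_a\mid X_S)^{-1}-\Var(X_b\mid X_S)^{-1}\big)$, so $T$ prefers the variable with smaller conditional variance to be placed first. Telescoping this across the prefix gives the equivalence above, and the two inputs that should make it go through are $\Var(X_u)=\omega^*_u=\min_j\omega^*_j$ and the structural bound that the component of any $X_j$ not driven by $\epsilon_u$ has variance at least $\omega^*_j\ge\omega^*_u$ (because $\epsilon_j$ still enters it with unit coefficient).

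I expect the real obstacle to be exactly the global character of this step. A single adjacent swap toward $[\sigma^*]$ need not lower $T$ --- a conditioning set can contain descendants that shrink the relevant conditional variances in the wrong direction --- so one cannot ``bubble-sort'' $\sigma$ into $[\sigma^*]$ along a monotone trace path, and the comparison must be made on the entire prefix at once, weighing $\Var(X_u)-\Var(X_u\mid X_{\Pred_u(\sigma)})=\sum_{k<r}\mathrm{Cov}\big(X_u,X_{\sigma(k)}\mid X_{\sigma(1)},\dots,X_{\sigma(k-1)}\big)^2\big/\Var\big(X_{\sigma(k)}\mid X_{\sigma(1)},\dots,X_{\sigma(k-1)}\big)$ against the same sum with each denominator replaced by $\Var\big(X_u\mid X_{\sigma(1)},\dots,X_{\sigma(k-1)}\big)$; there is genuine cancellation between terms, so no term-by-term bound suffices. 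It is precisely here that ``weakly increasing along $\sigma^*$'' is used essentially: without it the source need not carry the least noise and the inequality fails. Uniqueness of $G^*$ as the minimum-trace DAG then follows because the reduction is strict whenever an ancestral inversion is present, while all orderings in $[\sigma^*]$ induce the same pair $(B^*,\Omega^*)$ and hence the same DAG.
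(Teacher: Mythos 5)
Your reduction is sound as far as it goes: peeling off the true source $u=\sigma^*(1)$, observing that conditioning on $\sX_u$ leaves a Gaussian SEM on $G^*\setminus\{u\}$ whose error variances are unchanged and still weakly increasing, and splitting $\trace(\Omega^*_\sigma)$ at position $r=\sigma^{-1}(u)$ does correctly reduce the theorem to the single prefix inequality
\begin{equation*}
\Var(\sX_u)-\Var(\sX_u\mid \sX_S)\;\le\;\sum_{k<r}\Big[\Var\big(\sX_{\sigma(k)}\mid \sX_{<k}\big)-\Var\big(\sX_{\sigma(k)}\mid \sX_u,\sX_{<k}\big)\Big],
\end{equation*}
with $S=\Pred_u(\sigma)$ and $\sX_{<k}=(\sX_{\sigma(1)},\dots,\sX_{\sigma(k-1)})$. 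But this inequality is exactly where the proof has to happen, and you do not prove it. Setting $c_k=\mathrm{Cov}(\sX_u,\sX_{\sigma(k)}\mid \sX_{<k})$, the two sides telescope to $\sum_{k<r}c_k^2/\Var(\sX_u\mid\sX_{<k})$ versus $\sum_{k<r}c_k^2/\Var(\sX_{\sigma(k)}\mid\sX_{<k})$, so a term-by-term bound would require $\Var(\sX_u\mid\sX_{<k})\le\Var(\sX_{\sigma(k)}\mid\sX_{<k})$ for every $k$; this is false in general, because the prefix may contain descendants of $\sigma(k)$ (in the chain $1\to2\to3$ with equal error variances and a large weight on $2\to3$, one has $\Var(\sX_2\mid\sX_3)\ll\Var(\sX_1\mid\sX_3)$). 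You acknowledge precisely this --- ``no term-by-term bound suffices,'' the bubble-sort path is not monotone --- and then stop, so the ``real obstacle'' you name is left standing. The two inputs you invoke ($\omega^*_u=\min_j\omega^*_j$ and the unit coefficient of $\epsilon_j$ in $\sX_j$) are never assembled into an argument, and the strictness needed for uniqueness (the case $\sigma\notin[\sigma^*]$ with $\sigma|_{-u}$ still causal for $G^*\setminus\{u\}$) is asserted rather than derived.

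For contrast, the paper sidesteps this prefix bookkeeping entirely: it shows that the sorted vector of $\log\omega^\sigma_j$ majorizes the sorted vector of $\log\omega_j$ --- equal totals from the determinant identity you also note, and the partial-sum condition from the entropy chain rule plus ``conditioning reduces entropy'' applied to a suitably rearranged ordering --- and then applies Karamata's inequality with $f=\exp$, obtaining strictness by exhibiting a coordinate at which the sorted vectors must differ. If you want to complete your route, the displayed prefix inequality is a genuine lemma in its own right, and the cancellation you point to suggests it too will need a global (majorization- or entropy-type) argument rather than a swap-by-swap one.
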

\begin{proof}
    See Section~\ref{subsec:thm1} in the Supplementary Material.
\end{proof}

The result substantially extends the range of settings in which the minimum-trace DAG attains identifiability.
To see this, we assume that the parameter pair $(B^*, \Omega^*)$ is drawn from a distribution that is absolutely continuous with respect to the Lebesgue measure, as is commonly assumed in the literature~\citep{spirtes2000causation}. The condition holds on a set of positive measure,  whereas the equal error variance assumption corresponds to a measure-zero subset of the parameter space.
Also, the condition may offer an appealing description of some systems where upstream
variables exhibit lower variability, while downstream variables tend to accumulate propagated uncertainty.

\begin{remark}[On existing identifiability results]
Identifiability under weakly increasing error variances has also been established through two approaches for recovering the true causal ordering~\citep{park2020identifiability}: \cite{chen2019causal} identify the first variable in the true ordering based on the fact that it always has the smallest marginal variance, then remove that node and recursively apply the procedure to the remaining subgraph. With a similar procedure, \cite{ghoshal2018learning} identify the last variable in the true ordering by using that it always has the largest conditional variance given remaining variables.
Theorem~\ref{thm:increasing} provides a complementary perspective that the ordering identified by these procedures corresponds to the minimum-trace permutation, that is, the permutation $\sigma$ that minimizes  the total residual sum of squares, whose population counterpart is the minimizer of $\trace (\Omega_\sigma^*)$ among all $(B_\sigma^*, \Omega_\sigma^*) \in \mathcal{D}(\Sigma^*)$.
\end{remark}

We introduce results at the sample level that follow from the identifiability. Let $X$ denote an $n \times p$ data matrix, each row of which is an independent copy of $\sX$. Let $\sigma^*$ be the true causal ordering, and suppose that the error variances are weakly increasing in $\sigma^*$. Then, we have $\min_{\sigma \notin [\sigma^*]}   \trace(\Omega_\sigma^*) > \trace(\Omega^*)$ by Theorem~\ref{thm:increasing}. We consider a slightly stronger condition on the gap between the two terms, which states that there exists a constant $\xi > 0$ such that
\begin{align}\label{eq:gap}
    \min_{\sigma \notin [\sigma^*]}   \trace(\Omega_\sigma^*) / \trace(\Omega^*) > 1 + \xi.
\end{align}
This ``gap'' condition has been employed in high-dimensional results~\citep{chang2024order, aragam2019globally}
, and is also called the ``omega-min'' condition in the equal variance setting~\citep{van2013ell}.

\begin{corollary}[Support recovery~\citep{aragam2019globally}]\label{cor:aragam}
    Consider the model described in Section~\ref{sec:aragam} of the Supplementary Material, and suppose that  assumptions (B1)-(B5) therein hold. Assume that  condition~\eqref{eq:gap} holds.
    Then,  $G^*$ can be identified with probability at least $1 - O(p^{-k})$, where $k = \max_{j \in [p]} |\mathrm{PA}_j(G^*)|$.
\end{corollary}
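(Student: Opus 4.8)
The plan is to obtain the corollary as a direct consequence of the support-recovery theorem of \cite{aragam2019globally}, whose statistical guarantee rests on a population-level separation between the true DAG and all of its competitors. The only genuinely new ingredient is that our Theorem~\ref{thm:increasing}, together with the gap condition~\eqref{eq:gap}, supplies exactly this separation under the weakly increasing variance assumption, which is strictly weaker than the equal-variance assumption under which their result was originally stated. I would therefore first verify that assumptions (B1)--(B5) together with~\eqref{eq:gap} instantiate every premise their theorem requires, so that its conclusion transfers to the present setting essentially unchanged.

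Concretely, the key steps run as follows. The sample objective is the total residual sum of squares, $\trace(\hat\Omega_\sigma)$, whose population counterpart is $\trace(\Omega^*_\sigma)$; by Theorem~\ref{thm:increasing} the latter is uniquely minimized over $[\sigma^*]$, and~\eqref{eq:gap} provides a strict multiplicative margin, of order $\xi\,\trace(\Omega^*)$, separating $[\sigma^*]$ from every $\sigma\notin[\sigma^*]$. I would then establish a uniform concentration bound showing that $\trace(\hat\Omega_\sigma)$ stays within this margin of $\trace(\Omega^*_\sigma)$ simultaneously over the relevant orderings, using (B1)--(B5) to control the sample size, the maximum in-degree, the spectral bounds on $\Sigma^*$, and the Gaussian tails of the design. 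On the event that this concentration holds, the empirical minimizer of $\trace(\hat\Omega_\sigma)$ must fall inside $[\sigma^*]$, and since $G^*_\sigma = G^*$ for every $\sigma\in[\sigma^*]$ as noted in the preliminaries, the recovered DAG coincides with $G^*$.

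The main obstacle is delivering the probability bound $1-O(p^{-k})$ with the correct exponent, since a naive union bound over all $p!$ orderings would be hopelessly lossy. The resolution, following \cite{aragam2019globally}, exploits the additive node-wise structure $\trace(\Omega_\sigma^*)=\sum_{j}\omega^\sigma_j$, in which each summand is the residual variance from regressing $\sX_j$ on its predecessors. It then suffices to control the per-node regression residuals uniformly over candidate parent sets of size at most $k=\max_{j}|\Pa_j(G^*)|$, of which there are only $O(p^{k})$; a chi-square tail bound for each such regression, combined with this much smaller union, is what produces the $O(p^{-k})$ failure probability, with the exponent $k$ entering precisely through the cardinality of the parent-set collection. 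The delicate bookkeeping is to calibrate the per-node deviation scale against the margin $\xi$ so that a single threshold rules out all suboptimal orderings at once; once this is arranged, the conclusion follows from Theorem~\ref{thm:increasing} and~\eqref{eq:gap}.
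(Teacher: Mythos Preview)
Your high-level plan in the first paragraph is exactly what the paper does: verify that the assumptions instantiate the premises of Theorem~3.1 of \cite{aragam2019globally}, then cite that theorem as a black box. The paper's proof is a three-line affair: define $\mathrm{gap}(\Sigma^*)=\inf\{\trace\Omega'-\trace\Omega^*:\Omega'\neq\Omega^*,(B',\Omega')\in\mathcal{D}(\Sigma^*)\}$, observe that~\eqref{eq:gap} together with the eigenvalue lower bound (B1) (which forces $\trace(\Omega^*)\ge p\,\underline{\nu}$) gives $\mathrm{gap}(\Sigma^*)\ge \underline{\nu}\,\xi\, p$, note that this plus (B4) is precisely Condition~3.1 of \cite{aragam2019globally}, and invoke their Theorem~3.1.

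Where you drift is in the second and third paragraphs. Having said you will cite their theorem, you then propose to re-prove its contents: uniform concentration of $\trace(\hat\Omega_\sigma)$ around $\trace(\Omega^*_\sigma)$, and a union bound over parent sets to obtain the $O(p^{-k})$ failure probability. None of that is needed here; it is already inside Theorem~3.1 of \cite{aragam2019globally}, and the point of (B1)--(B5) in the supplement is to match their hypotheses, not to serve as raw material for a fresh concentration argument. The only thing you have to supply is the translation from the multiplicative gap~\eqref{eq:gap} to their additive gap condition, which is the one-line computation above.

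A smaller point: you call Theorem~\ref{thm:increasing} ``the only genuinely new ingredient'', but in the formal proof it is not invoked at all. Theorem~\ref{thm:increasing} motivates why~\eqref{eq:gap} is a natural assumption under weakly increasing variances (it guarantees the unquantified strict inequality), but the corollary simply \emph{assumes}~\eqref{eq:gap}, and that assumption alone feeds into Aragam et al.'s framework. So the correct dependency graph is~\eqref{eq:gap}~$+$~(B1)~$\Rightarrow$ Condition~3.1 of \cite{aragam2019globally}, with Theorem~\ref{thm:increasing} playing no role in the deduction.
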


\begin{proof}
See Section~\ref{sec:aragam} in the Supplementary Material. 
\end{proof} 
\begin{corollary}[Strong model selection consistency~\citep{chang2024order}]\label{cor:woody}
    Let $\post$ denote the posterior distribution of the Bayesian order-based model described in Section~\ref{sec:woody} of the Supplementary Material, and suppose that assumptions (C1)-(C3) therein hold. Assume that condition~\eqref{eq:gap} holds.
    Then,     $\post(G^*)$ converges to 1.
\end{corollary}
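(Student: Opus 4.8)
The plan is to obtain Corollary~\ref{cor:woody} directly from Theorem~\ref{thm:increasing} together with the strong model selection consistency result already established for the Bayesian order-based model in \cite{chang2024order}. In that existing result, under its assumptions (C1)--(C3) and the gap condition~\eqref{eq:gap}, the posterior $\post$ is shown to concentrate --- at a rate linear in $n$ on the log scale --- on the minimum-trace DAG whenever such a DAG is unique; the equal-error-variance hypothesis was used there only to certify that this unique minimizer is $G^*$. Theorem~\ref{thm:increasing} supplies exactly that certificate under the weaker hypothesis of weakly increasing error variances: every $\sigma \in [\sigma^*]$ is a minimum-trace permutation, every $\sigma \notin [\sigma^*]$ is strictly worse, and all $\sigma \in [\sigma^*]$ induce the same DAG $G^*$. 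So once the hypotheses of the \cite{chang2024order} argument are re-verified in the present regime, $\post(G^*) \to 1$ follows.

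Concretely, I would carry out the following steps. First, recall from \cite{chang2024order} that, up to a term shared by all DAGs, the complexity penalty contributed by the edge and ordering prior, and an $O_p(1)$ fluctuation, the log marginal posterior weight of the DAG $G^*_\sigma$ is a strictly decreasing function of the empirical total residual sum of squares $\sum_{j=1}^p \SSE_j(\sigma)$ computed from $X$, equivalently of $\trace(\hat\Omega_\sigma)$ with $\hat\Omega_\sigma$ the diagonal matrix of empirical conditional residual variances under $\sigma$. Second, under the moment and sub-Gaussianity conditions in (C1)--(C3), show that $\trace(\hat\Omega_\sigma)$ concentrates around its population counterpart $\trace(\Omega_\sigma^*)$ uniformly over the finite index set $\bbS^p$, with deviations small enough to be absorbed by the population gap. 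Third, compare $G^*$ with a competitor $G^*_\sigma$: if $\sigma \in [\sigma^*]$, then $G^*_\sigma = G^*$ by uniqueness of the modified Cholesky factorization consistent with $\sigma$, so this term contributes to $\post(G^*)$ rather than against it; if $\sigma \notin [\sigma^*]$, then Theorem~\ref{thm:increasing} and~\eqref{eq:gap} give $\trace(\Omega_\sigma^*) \ge (1+\xi)\,\trace(\Omega^*)$, which --- after passing to the logarithm of the likelihood ratio and subtracting the penalty --- drives the log-posterior odds of $G^*_\sigma$ against $G^*$ below $-cn$ for some $c = c(\xi) > 0$, with probability $1 - o(1)$. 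Fourth, sum this $e^{-cn}$ bound over the competing DAGs; since there are at most $\exp(O(p^2))$ DAGs on $p$ nodes, a count dominated by $e^{cn}$ under the sample-size scaling imposed by (C1)--(C3), we get $\post(G^*) = 1 - o_p(1)$.

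The step I expect to be the main obstacle is the bookkeeping needed to confirm that the argument of \cite{chang2024order} uses the equal-variance \emph{data-generating} assumption \emph{only} through ``the minimum-trace DAG is unique and equals $G^*$'' together with~\eqref{eq:gap}, and nowhere else. The delicate issue is model misspecification: the Bayesian model's prior enforces equal error variances, whereas the true variances are merely weakly increasing, so the likelihood is misspecified in its variance component. One has to argue that this is harmless for DAG selection --- since the equal-variance marginal likelihood reduces exactly to a monotone function of the total residual sum of squares, the misspecification only distorts the nuisance estimate of the common-variance scalar, not the relative ranking of the DAGs, whose population limit $\trace(\Omega_\sigma^*)$ is, by Theorem~\ref{thm:increasing}, uniquely minimized over $\sigma \in [\sigma^*]$, i.e.\ at $G^*$. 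The remaining ingredients --- concentration of the sample covariance, the penalty accounting, and the prior tail bounds --- hold for general Gaussian data and never invoke $\Omega^* = \omega^* I$. Once this audit is complete, Corollary~\ref{cor:woody} is immediate.
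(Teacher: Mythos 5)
Your proposal is correct and follows essentially the same route as the paper: the paper's proof simply verifies the two hypotheses of Theorem~1 in \cite{chang2024order} --- its Assumption~A is exactly the gap condition~\eqref{eq:gap} (the quantitative form of Theorem~\ref{thm:increasing}'s separation), and its Assumption~B ($\widehat{G}^{\mathrm{MAP}}_\sigma = G^*$ for $\sigma \in [\sigma^*]$) is supplied by Proposition~1 of \cite{chang2024order} under (C1)--(C3). The internal machinery you sketch (monotonicity of the score in the total residual sum of squares, concentration of $\trace(\hat\Omega_\sigma)$, and the union bound over competitors) is precisely what is packaged inside that cited theorem, and your ``audit'' that the data-generating assumption enters only through the gap and the MAP-recovery step is exactly what the paper's verification of Assumptions~A and~B accomplishes.
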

\begin{proof}
See Section~\ref{sec:woody} in the Supplementary Material.  
\end{proof}

\section{Hill climbing algorithms on the permutation space}\label{sec:computation}

Given that identifiability is attained when the
error variances are weakly increasing in the true ordering, solving the following maximization problem
\begin{align}\label{eq:object}
    \hat{\sigma} = \arg \max_{\sigma \in \bbS^p} \{ - \trace (\Omega_\sigma)\}
\end{align}
is challenging due to the combinatorial nature of the permutation space $\bbS^p$. This is simply because enumerating all permutations requires $p!$ queries, which becomes computationally infeasible even for moderate values of $p$. 
Among various methods, hill climbing algorithms offer one of the simplest and most intuitive approaches to finding a solution. 
This class of algorithms proceeds by evaluating all states in a predefined local neighborhood $\cN(\sigma)$ of the current state $\sigma$, selecting the best one as the next state, and iterating this procedure until no further improvement is possible. 
We consider the class of hill climbing algorithms that use  $ - \trace(\Omega_\sigma)$ as the objective function. 
The choice of the local neighborhood $\cN$ significantly affects the efficiency of the algorithms. One of the most commonly used options is the adjacent transposition neighborhood~\citep{teyssier2005ordering,  agrawal2018minimal, friedman2003being}, defined as
\begin{align*}
    \cN_{\mathrm{ADJ}} = \{\sigma' \in \bbS^p \mid \sigma' = \mathrm{ADJ}(\sigma, i) \text{ for }i\in [p-1]\},
\end{align*}
where the adjacent transposition operator $\mathrm{ADJ}(\sigma, i)$ swaps the $i$-th and $(i+1)$-th elements of $\sigma$, 
\begin{align*}
    \mathrm{ADJ}(\sigma, i)\,:\, (\sigma(1), \dots, \sigma(i), \sigma(i+1),  \dots, \sigma(p))  \mapsto (\sigma(1), \dots, \sigma(i+1), \sigma(i),  \dots, \sigma(p)),
\end{align*}
for $i \in [p-1]$. This is because it offers
computational advantages: only $p-1$ evaluations are needed, and since each candidate differs only slightly, parts of the score calculation can be reused, making evaluation within each iteration efficient~\citep{chang2024order}. However, the search may easily get trapped in local optima due to its minimal search range. We introduce the random-to-random (R2R) operator
\begin{align*}
    \mathrm{R2R}(\sigma, i, j)\,:\, (\sigma(1), \dots, \sigma(i), \dots,  \sigma(j),  \dots, \sigma(p))  \mapsto (\sigma(1), \dots, \sigma(j), \sigma(i),  \dots, \sigma(p)),
\end{align*}
which outputs an ordering obtained by inserting the $j$-th element of $\sigma$ to the $i$-th position, defined for $i < j$. 
\begin{algorithm}[t!]
\caption{Hill climbing algorithm with R2R neighborhood} \label{alg:oracle}
\KwInput{An initial ordering $\sigma$, a covariance matrix $\Sigma$, and a R2R neighborhood $\mathcal{N}_{\mathrm{R2R}}$}
\While{TRUE}
{ 
$\tau \leftarrow \arg\max_{\sigma' \in \mathcal{N}_{\mathrm{R2R}} (\sigma)} \{-\trace(\Omega_{\sigma'}): (B_{\sigma'}, \Omega_{\sigma'}) \in \mathcal{D}(\Sigma)\}$ \\
\If{$\trace(\Omega_\tau) < \trace(\Omega_\sigma)$}{\vspace{1mm}
    $\sigma \leftarrow \tau$
}\Else{
    \textbf{Break}
}
\vspace{1mm}
}
\KwOutput{A DAG $G$ corresponding to the weighted adjacency matrix $B_\sigma$}
\end{algorithm}
Defining the R2R neighborhood as $\cN_{\mathrm{R2R}}(\sigma) = \{\sigma' \in \bbS^p \mid \sigma' = \mathrm{R2R}(\sigma, i, j) \text{ for }i < j, \,\, i, j \in [p]\}$, we prove that the steepest-ascent hill climbing algorithm with $\cN_{\mathrm{R2R}}$ (Algorithm~\ref{alg:oracle}) does not admit any strict local optima; that is, for any $\sigma \notin [\sigma^*]$,
there does not exist a case where $\min_{\tau \in \cN_{\mathrm{R2R}}(\sigma)} \trace(\Omega_\tau) > \trace(\Omega_\sigma)$.

\begin{theorem}\label{thm:computation}
Consider the model~\eqref{eq:str_eq_individual} with the true causal ordering $\sigma^* \in \bbS^p$ and the true DAG $G^*$. Suppose that the error variances are weakly increasing with respect to $\sigma^*$, and  
\begin{equation}\label{eq:assumption1}
    \Var(\sX_{\sigma^*(i)} \mid \sX_{\sigma^*(1)}, \cdots, \sX_{\sigma^*(i-1)}) \le \Var(\sX_{\sigma^*(j)} \mid \sX_{\sigma^*(1)}, \cdots, \sX_{\sigma^*(j-1)}, \sX_{\sigma^*(j+1)}, \cdots, \sX_{\sigma^*(p)})
\end{equation}
for all $i < j$. Then, Algorithm~1 does not admit any strict local optima.
\end{theorem}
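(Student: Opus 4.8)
The plan is to establish the (slightly stronger) claim that for \emph{every} permutation $\sigma\neq\sigma^{*}$ there exists $\tau\in\cN_{\mathrm{R2R}}(\sigma)$ with $\trace(\Omega_{\tau})\le\trace(\Omega_{\sigma})$. Since $\sigma\notin[\sigma^{*}]$ in particular forces $\sigma\neq\sigma^{*}$, this gives $\min_{\tau\in\cN_{\mathrm{R2R}}(\sigma)}\trace(\Omega_{\tau})\le\trace(\Omega_{\sigma})$ for all $\sigma\notin[\sigma^{*}]$, which is exactly the absence of strict local optima. So fix $\sigma\neq\sigma^{*}$, let $k=\min\{t:\sigma(t)\neq\sigma^{*}(t)\}$ (well defined and at most $p-1$ because $\sigma\neq\sigma^{*}$), and let $j=\sigma^{-1}(\sigma^{*}(k))$. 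Since $\sigma$ and $\sigma^{*}$ agree on positions $1,\dots,k-1$ while $\sigma(k)\neq\sigma^{*}(k)$, we have $j>k$, so $\tau:=\mathrm{R2R}(\sigma,k,j)$ --- the ordering obtained by moving the ``first misplaced'' node $c:=\sigma^{*}(k)$ from position $j$ to position $k$ --- is a genuine neighbor of $\sigma$. I will show $\trace(\Omega_{\sigma})-\trace(\Omega_{\tau})\ge0$ for this $\tau$.

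The key computation is an exact formula for this difference. Put $Z_{m}=\{\sX_{\sigma(1)},\dots,\sX_{\sigma(m-1)}\}$, and recall $\trace(\Omega_{\sigma})=\sum_{m=1}^{p}\Var(\sX_{\sigma(m)}\mid Z_{m})$. Moving $c$ from position $j$ to position $k$ leaves every conditional‑variance term outside positions $k,\dots,j$ unchanged; within this block, $c$ passes from conditioning set $Z_{j}$ to $Z_{k}$, and each $\sigma(m)$ with $k\le m\le j-1$ acquires $\sX_{c}$ as an extra conditioning variable. Telescoping $\Var(\sX_{c}\mid Z_{k})-\Var(\sX_{c}\mid Z_{j})=\sum_{m=k}^{j-1}\bigl(\Var(\sX_{c}\mid Z_{m})-\Var(\sX_{c}\mid Z_{m},\sX_{\sigma(m)})\bigr)$ and applying the elementary Gaussian identity
\begin{equation*}
\Var(A\mid Z)-\Var(A\mid Z,B)=\frac{\mathrm{Cov}(A,B\mid Z)^{2}}{\Var(B\mid Z)}
\end{equation*}
to this difference and to the ``gain'' difference $\Var(\sX_{\sigma(m)}\mid Z_{m})-\Var(\sX_{\sigma(m)}\mid Z_{m},\sX_{c})$ --- which share the numerator $\rho_{m}:=\mathrm{Cov}(\sX_{c},\sX_{\sigma(m)}\mid Z_{m})^{2}\ge0$ --- one obtains
\begin{equation*}
\trace(\Omega_{\sigma})-\trace(\Omega_{\tau})=\sum_{m=k}^{j-1}\rho_{m}\left(\frac{1}{\Var(\sX_{c}\mid Z_{m})}-\frac{1}{\Var(\sX_{\sigma(m)}\mid Z_{m})}\right).
\end{equation*}
It therefore suffices to prove $\Var(\sX_{c}\mid Z_{m})\le\Var(\sX_{\sigma(m)}\mid Z_{m})$ for every $m\in\{k,\dots,j-1\}$.

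Both sides are controlled by monotonicity of Gaussian conditional variances together with assumption~\eqref{eq:assumption1}. For the left side, $Z_{m}\supseteq\{\sX_{\sigma(1)},\dots,\sX_{\sigma(k-1)}\}=\{\sX_{\sigma^{*}(1)},\dots,\sX_{\sigma^{*}(k-1)}\}$, so conditioning on a larger set only lowers the variance and $\Var(\sX_{c}\mid Z_{m})\le\Var(\sX_{\sigma^{*}(k)}\mid\sX_{\sigma^{*}(1)},\dots,\sX_{\sigma^{*}(k-1)})=\omega^{\sigma^{*}}_{\sigma^{*}(k)}$ by the definition of $\omega^{\sigma^{*}}_{\sigma^{*}(k)}$. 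For the right side, $Z_{m}\subseteq[p]\setminus\{\sigma(m)\}$ gives $\Var(\sX_{\sigma(m)}\mid Z_{m})\ge\Var(\sX_{\sigma(m)}\mid\sX_{[p]\setminus\{\sigma(m)\}})$; and since $\sigma$ and $\sigma^{*}$ agree on positions $1,\dots,k-1$ and $\sigma^{*}(k)=\sigma(j)$, every node $\sigma(m)$ with $k\le m\le j-1$ equals $\sigma^{*}(\ell)$ for some $\ell>k$, whence assumption~\eqref{eq:assumption1} applied to the pair $(k,\ell)$ yields $\Var(\sX_{\sigma(m)}\mid\sX_{[p]\setminus\{\sigma(m)\}})\ge\omega^{\sigma^{*}}_{\sigma^{*}(k)}$. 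Chaining the two bounds gives $\Var(\sX_{c}\mid Z_{m})\le\omega^{\sigma^{*}}_{\sigma^{*}(k)}\le\Var(\sX_{\sigma(m)}\mid Z_{m})$, as needed, hence $\trace(\Omega_{\tau})\le\trace(\Omega_{\sigma})$.

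The step that I expect to demand the most care is deriving the displayed formula for $\trace(\Omega_{\sigma})-\trace(\Omega_{\tau})$: one has to keep precise track of how each conditioning set is altered by the insertion operator $\mathrm{R2R}(\cdot,k,j)$ and organize the telescoping so that the contribution of moving $c$ forward cancels against the contributions of the shifted nodes term by term through the common factor $\rho_{m}$. Once the identity is in hand the remainder is short; the only conceptual point is recognizing that assumption~\eqref{eq:assumption1}, invoked with first index $k$, is exactly what renders every summand nonnegative for this particular choice of move (and in fact strictly positive unless $\rho_{m}=0$ for all relevant $m$, a nongeneric degeneracy, in which case the same move still gives $\trace(\Omega_{\tau})=\trace(\Omega_{\sigma})$, which suffices). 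Note also that the weakly increasing hypothesis is subsumed by~\eqref{eq:assumption1}, since conditioning on all remaining variables can only lower a conditional variance.
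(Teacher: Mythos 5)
Your proof is correct, and while you select exactly the same R2R move as the paper (move the first misplaced node $c=\sigma^*(k)$ back to position $k$), the argument that this move does not increase the trace is genuinely different. The paper works on the log scale: it shows via the differential-entropy chain rule that $\sum_{m=k}^{j}\log\omega^\sigma_{\sigma(m)}=\sum_{m=k}^{j}\log\omega^\tau_{\sigma(m)}$, that all but one term weakly decreases, and that the remaining term $b_j$ is the smallest of the new values (this is where assumption~\eqref{eq:assumption1} enters), and then invokes a Karamata/majorization lemma (Lemma~3 of the supplement) to convert the log-scale comparison into $\sum a_m\ge\sum b_m$. You instead derive an exact closed-form identity,
\begin{equation*}
\trace(\Omega_{\sigma})-\trace(\Omega_{\tau})=\sum_{m=k}^{j-1}\mathrm{Cov}(\sX_{c},\sX_{\sigma(m)}\mid Z_{m})^{2}\left(\frac{1}{\Var(\sX_{c}\mid Z_{m})}-\frac{1}{\Var(\sX_{\sigma(m)}\mid Z_{m})}\right),
\end{equation*}
obtained by telescoping the gain of $c$ against the losses of the displaced nodes through the Gaussian conditional-variance update formula, and then check that each summand is nonnegative using monotonicity of conditional variances plus assumption~\eqref{eq:assumption1} applied with first index $k$ (I verified the identity and both bounds; your observation that $\sigma(m)=\sigma^*(\ell)$ with $\ell>k$ for $k\le m\le j-1$ is what makes the assumption applicable). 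Your route is more elementary --- no entropy, no majorization --- and buys something extra: an explicit formula for the size of the improvement, which identifies precisely when the move is strictly improving (some $\rho_m>0$ together with a strict variance gap) rather than merely non-decreasing; the paper's route, by contrast, isolates the single inequality $b_{j}\le b_m$ as the crux and reuses the same Karamata machinery that drives Theorem~1, so it is more uniform across the two theorems. One cosmetic remark: you prove the claim for all $\sigma\neq\sigma^*$, which subsumes the theorem's statement for $\sigma\notin[\sigma^*]$, and your closing observation that \eqref{eq:assumption1} already implies the weakly increasing condition matches the paper's own remark that \eqref{eq:assumption1} is the stronger hypothesis.
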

\begin{proof}
    See Section~\ref{subsec:thm2} in the Supplementary Material. 
\end{proof}

As the weakly increasing error variance condition states that $\Var(\sX_{\sigma^*(i)} \mid \sX_{\sigma^*(1)}, \cdots, $ $ \sX_{\sigma^*(i-1)}) 
 \le \Var(\sX_{\sigma^*(j)} \mid \sX_{\sigma^*(1)}, \cdots, \sX_{\sigma^*(j-1)})$ for $i < j$, the assumption~\eqref{eq:assumption1} imposes a stronger condition. Assessing the assumption is challenging, as it requires analyzing the interplay between $B^*$ and $\Omega^*$ on a case-by-case basis. To evaluate how likely the result of Theorem~\ref{thm:computation} holds in practice,
 we conduct an extensive simulation study under standard settings. The algorithm admits no strict local optima, as the theorem states, and only a few weak local optima. See Section~\ref{subsec:validation} for further information. 

\begin{remark}[Comparison with other neighborhoods]
    The size of the R2R neighborhood is $p(p-1)/2$. We examine two other neighborhoods of equal size to see whether they also exhibit similar performance. The random transposition (RTS) operator is defined as
\begin{align*}
    \mathrm{RTS}(\sigma, i, j):\, (\sigma(1), \dots, \sigma(i), \dots,  \sigma(j),  \dots, \sigma(p))  \mapsto (\sigma(1), \dots, \sigma(j), \dots,  \sigma(i),  \dots, \sigma(p)),
\end{align*}
which corresponds to interchanging the $i$-th and the $j$-th elements of $\sigma$, while keeping the others unchanged. The reversed random-to-random (R2R-REV) operator is defined as
\begin{align*}
    \text{R2R-REV}(\sigma, i, j):\, (\sigma(1), \dots, \sigma(i), \dots,  \sigma(j),  \dots, \sigma(p))  \mapsto (\sigma(1), \dots, \sigma(j), \sigma(i),  \dots, \sigma(p)),
\end{align*}
which outputs the ordering obtained by inserting the $i$-th element of $\sigma$ to the $j$-th position with $i < j$. We defer the formal definitions of the ADJ, RTS, R2R, and R2R-REV operators, along with their examples, to Section~\ref{subsec:operators} in the Supplementary Material. Similar to ADJ and R2R neighborhoods, we define $\cN_{\mathrm{op}}(\sigma) = \{\sigma' \in \bbS^p \mid \sigma' = \mathrm{op}(\sigma, i, j) \text{ for }i < j, \,\, i, j \in [p]\}$, for $\mathrm{op} = \mathrm{RTS}$  and $ \text{R2R-REV}$.
Interestingly, the algorithms with both neighborhoods
admit several strict local optima and a significant number of weak local optima, indicating that they are suboptimal under the standard simulation setting. This suggests that neighborhoods of the same size can differ in performance, and careful analysis is required to design an algorithm effectively.
\end{remark}

    

\section{Simulation studies}

We use the following procedure to generate the true DAG \( G^* \) throughout the section. We fix the true ordering to be \(\sigma^* = (1, \ldots, p)\), and for each pair \((i, j)\) such that \(i < j\), we add edge \(i \rightarrow j\) to \(G^*\) with probability \(p_{\text{edge}} = 3 / (2p - 2)\). Hence, the expected number of edges of \(G^*\) is \(3p / 4\). We generate $\Sigma^* = \Sigma(B^*, \Omega^*)$ by generating edge weights \( B_{ij}^* \) for each edge \( i \rightarrow j \in G^* \) from the uniform distribution on \( [-1, -0.3] \cup [0.3, 1] \). The error variances $\{\Omega^*_{jj}\}_{j=1}^p = \{\omega^*_j\}_{j=1}^p$ are drawn from the uniform distribution on \( [1 - a, 1 + a] \), where \( a \sim \mathrm{Unif}[0, 1] \) and are then sorted in increasing order. We resample $(G^*, B^*, \Omega^*)$  for each replication in each simulation setting.

\subsection{Comparison of local neighborhoods}\label{subsec:validation}

In this simulation, we fix \( p = 8 \) so that we can search over all \( 8! = 40{,}320 \) possible orderings to count the number of strict and weak local optima. 
Given a search neighborhood \(\mathcal{N}\), we say that $\sigma \in \bbS^p$ is a strict local optimum if $\operatorname{tr}(\Omega_\sigma) < \operatorname{tr}(\Omega_{\sigma'})$ for all $\sigma' \in \mathcal{N}(\sigma)$. Similarly, $\sigma \in \bbS^p$ is a weak local optimum if $\operatorname{tr}(\Omega_\sigma) \leq \operatorname{tr}(\Omega_{\sigma'})$ for all $\sigma' \in \mathcal{N}(\sigma)$. 
To compare the four neighborhoods defined in Section~\ref{sec:computation}, we compute the number of strict and weak local optima, and average the results over 10,000 random seeds. The results are summarized in Table~\ref{tab:local_modes}.
We highlight that the R2R neighborhood produces no strict local optima, while the RTS and R2R-REV neighborhoods admit at least one strict local optimum in 34 and 61 out of 10{,}000 simulations, respectively. 
The RTS and R2R-REV neighborhoods produce weak local optima in 64\% and 53\% of the 10{,}000 simulations, respectively, whereas the R2R neighborhood yields only 4 such cases. The ADJ neighborhood, a standard choice for order-based methods, yields weak local optima in 99.9\% of the total simulations. 
The result supports the conclusion of Theorem~2 and guides the choice of neighborhood in local search algorithms, suggesting that any selected neighborhood should include the R2R neighborhood.

\begin{table}[t]
\centering
\begin{tabular}{lcccc}

 & ADJ & RTS & R2R-REV  & R2R \\

Strict local optima       & 0.15 $\pm$ 0.00 & 0.00 $\pm$ 0.02 & 0.01 $\pm$ 0.00 & 0 $\pm$ 0  \\
Weak local optima  & 9472.36 $\pm$ 56.50 & 190.43 $\pm$  3.33  & 503.62 $\pm$ 8.07 &  0.02 $\pm$ 0.01 \\

\end{tabular}
\caption{The number of strict and weak local optima across four neighborhoods over 10,000 repetitions. The maximum possible value of an entry is $8! = 40,320$. Each value represents mean $\pm$ one standard error.}
\label{tab:local_modes}
\end{table}

\subsection{Algorithm complexity}\label{subsec:complexity}

Empirical results in Section~\ref{subsec:validation} indicate that  the algorithm is consistent in most cases. We further investigate how the number of iterations required for convergence  scales with the number of variables \( p = 5, 10, 20, 50, 100 \), and we evaluates estimation accuracy using the edge difference between the estimated and true graphs. The data $X$ are sampled from $\mathrm{MVN}_p(0, \Sigma^*)$, with $n = 1,000$ samples. We run the finite-sample algorithm outlined in the Supplementary Material (Algorithm~\ref{alg:full}). The result over 50 repetitions with random initial ordering is presented in Table~\ref{tab:edge_steps}. Interestingly, the number of iterations required for convergence never exceeds $p-1$, which suggests that the algorithm is highly efficient in practice. Under some settings and assumptions, it may be possible to prove polynomial-time convergence of the algorithm. However, such a result would require a case-by-case analysis.

\begin{table}[h]
\centering
\begin{tabular}{cccccc}

$p$ & \(  5 \) & \(  10 \) & \(  20 \) & \(  50 \) & \( 100 \) \\

Edge difference     &       0 $\pm$ 0   &     0 $\pm$ 0    &     0 $\pm$ 0     &     0 $\pm$ 0    &    0 $\pm$ 0      \\
Mean &  1.68 $\pm$  0.11   &  3.08 $\pm$ 0.20     &  5.16   $\pm$   0.33  &   11.68 $\pm$ 0.53   &   21.18 $\pm$  0.90   \\
Max  &  4  &  6  &  10 &   22 &  44 \\

\end{tabular}
\caption{Edge difference and the mean (second row) and maximum (third row) number of iterations to termination, over 50 repetitions with random initialization across varying values of \( p \). Entries in the first two rows report the mean \( \pm \) one standard error.}
\label{tab:edge_steps}
\end{table}







\bibliographystyle{plainnat}
\bibliography{reference_arxiv}

\begin{thebibliography}{21}
\providecommand{\natexlab}[1]{#1}
\providecommand{\url}[1]{\texttt{#1}}
\expandafter\ifx\csname urlstyle\endcsname\relax
  \providecommand{\doi}[1]{doi: #1}\else
  \providecommand{\doi}{doi: \begingroup \urlstyle{rm}\Url}\fi

\bibitem[Agrawal et~al.(2018)Agrawal, Uhler, and Broderick]{agrawal2018minimal}
Raj Agrawal, Caroline Uhler, and Tamara Broderick.
\newblock Minimal {I-MAP} {MCMC} for scalable structure discovery in causal {DAG} models.
\newblock In \emph{International Conference on Machine Learning}, pages 89--98. PMLR, 2018.

\bibitem[Aragam et~al.(2019)Aragam, Amini, and Zhou]{aragam2019globally}
Bryon Aragam, Arash Amini, and Qing Zhou.
\newblock Globally optimal score-based learning of directed acyclic graphs in high-dimensions.
\newblock \emph{Advances in Neural Information Processing Systems}, 32:\penalty0 4450--4462, 2019.

\bibitem[Chang et~al.(2024)Chang, Cai, and Zhou]{chang2024order}
Hyunwoong Chang, James~J Cai, and Quan Zhou.
\newblock Order-based structure learning without score equivalence.
\newblock \emph{Biometrika}, 111\penalty0 (2):\penalty0 551--572, 2024.

\bibitem[Chen et~al.(2019)Chen, Drton, and Wang]{chen2019causal}
Wenyu Chen, Mathias Drton, and Y~Samuel Wang.
\newblock On causal discovery with an equal-variance assumption.
\newblock \emph{Biometrika}, 106\penalty0 (4):\penalty0 973--980, 2019.

\bibitem[Chickering(2002)]{chickering2002optimal}
David~Maxwell Chickering.
\newblock Optimal structure identification with greedy search.
\newblock \emph{Journal of Machine Learning Research}, 3\penalty0 (Nov):\penalty0 507--554, 2002.

\bibitem[Cover(1999)]{cover1999elements}
Thomas~M Cover.
\newblock \emph{Elements of information theory}.
\newblock John Wiley \& Sons, 1999.

\bibitem[Friedman and Koller(2003)]{friedman2003being}
Nir Friedman and Daphne Koller.
\newblock Being {B}ayesian about network structure. {A} {B}ayesian approach to structure discovery in {B}ayesian networks.
\newblock \emph{Machine learning}, 50\penalty0 (1):\penalty0 95--125, 2003.

\bibitem[Ghoshal and Honorio(2018)]{ghoshal2018learning}
Asish Ghoshal and Jean Honorio.
\newblock Learning linear structural equation models in polynomial time and sample complexity.
\newblock In \emph{International Conference on Artificial Intelligence and Statistics}, pages 1466--1475. PMLR, 2018.

\bibitem[Hoyer et~al.(2008)Hoyer, Janzing, Mooij, Peters, and Sch{\"o}lkopf]{hoyer2008nonlinear}
Patrik~O Hoyer, Dominik Janzing, Joris~M Mooij, Jonas Peters, and Bernhard Sch{\"o}lkopf.
\newblock Nonlinear causal discovery with additive noise models.
\newblock In \emph{NIPS}, volume~21, pages 689--696. Citeseer, 2008.

\bibitem[Koller and Friedman(2009)]{koller2009probabilistic}
Daphne Koller and Nir Friedman.
\newblock \emph{Probabilistic graphical models: principles and techniques}.
\newblock MIT press, 2009.

\bibitem[Park(2020)]{park2020identifiability}
Gunwoong Park.
\newblock Identifiability of additive noise models using conditional variances.
\newblock \emph{Journal of Machine Learning Research}, 21\penalty0 (1):\penalty0 2896--2929, 2020.

\bibitem[Perrier et~al.(2008)Perrier, Imoto, and Miyano]{perrier2008finding}
Eric Perrier, Seiya Imoto, and Satoru Miyano.
\newblock Finding optimal {B}ayesian network given a super-structure.
\newblock \emph{Journal of Machine Learning Research}, 9\penalty0 (10), 2008.

\bibitem[Peters et~al.(2011)Peters, Mooij, Janzing, and Sch{\"o}lkopf]{peters2011identifiability}
J~Peters, J~Mooij, D~Janzing, and B~Sch{\"o}lkopf.
\newblock Identifiability of causal graphs using functional models.
\newblock In \emph{27th Conference on Uncertainty in Artificial Intelligence (UAI 2011)}, pages 589--598. AUAI Press, 2011.

\bibitem[Peters and B{\"u}hlmann(2014)]{peters2014identifiability}
Jonas Peters and Peter B{\"u}hlmann.
\newblock Identifiability of {G}aussian structural equation models with equal error variances.
\newblock \emph{Biometrika}, 101\penalty0 (1):\penalty0 219--228, 2014.

\bibitem[Scanagatta et~al.(2017)Scanagatta, Corani, and Zaffalon]{scanagatta2017improved}
Mauro Scanagatta, Giorgio Corani, and Marco Zaffalon.
\newblock Improved local search in {B}ayesian networks structure learning.
\newblock In \emph{Advanced methodologies for Bayesian networks}, pages 45--56. PMLR, 2017.

\bibitem[Shimizu et~al.(2006)Shimizu, Hoyer, Hyv{\"a}rinen, Kerminen, and Jordan]{shimizu2006linear}
Shohei Shimizu, Patrik~O Hoyer, Aapo Hyv{\"a}rinen, Antti Kerminen, and Michael Jordan.
\newblock A linear non-{G}aussian acyclic model for causal discovery.
\newblock \emph{Journal of Machine Learning Research}, 7\penalty0 (10), 2006.

\bibitem[Spirtes et~al.(2000)Spirtes, Glymour, and Scheines]{spirtes2000causation}
Peter Spirtes, Clark~N Glymour, and Richard Scheines.
\newblock \emph{Causation, prediction, and search}.
\newblock MIT press, 2000.

\bibitem[Teyssier and Koller(2005)]{teyssier2005ordering}
Marc Teyssier and Daphne Koller.
\newblock Ordering-based search: a simple and effective algorithm for learning {B}ayesian networks.
\newblock In \emph{Proceedings of the Twenty-First Conference on Uncertainty in Artificial Intelligence}, pages 584--590, 2005.

\bibitem[Van~de Geer and B{\"u}hlmann(2013)]{van2013ell}
Sara Van~de Geer and Peter B{\"u}hlmann.
\newblock $\ell^0$-penalized maximum likelihood for sparse directed acyclic graphs.
\newblock \emph{The Annals of Statistics}, 41\penalty0 (2):\penalty0 536--567, 2013.

\bibitem[Zhang(2010)]{zhang2010nearly}
Cun-Hui Zhang.
\newblock Nearly unbiased variable selection under minimax concave penalty.
\newblock \emph{The Annals of Statistics}, 38\penalty0 (2):\penalty0 894--942, 2010.

\bibitem[Zhou and Chang(2023)]{zhou2021complexity}
Quan Zhou and Hyunwoong Chang.
\newblock Complexity analysis of {B}ayesian learning of high-dimensional {DAG} models and their equivalence classes.
\newblock \emph{The Annals of Statistics}, 51\penalty0 (3):\penalty0 1058--1085, 2023.

\end{thebibliography}


\newpage
\appendix

\section*{Appendix}

\section{Proofs}
In this section, we provide the proof of Theorem \ref{thm:increasing} and Theorem \ref{thm:computation}. 
We first provide auxiliary results that are crucial for the proof of Theorem \ref{thm:increasing} and Theorem \ref{thm:computation}. 

\subsection{Auxiliary results}

We start with the majorization relationship between two sorted vectors. 
\begin{definition}[Majorization]\label{def: majorization}
    For two vectors $a = (a_1, a_2, \ldots, a_p) \in \bbR^p$ and $b = (b_1, b_2, \allowbreak \ldots, b_p) \in \bbR^p$, we write $a \succeq b$, or say that $a$ \textit{majorizes} $b$, if and only if the following three conditions are satisfied.
\begin{enumerate}
    \item [(a)] \textit{Sorted vectors:} $a_1 \leq \cdots \leq a_p$ and $b_1 \leq \cdots \leq b_p$.
    \item [(b)] \textit{Dominating  partial sums of the $k$ largest elements:} For all $k = 1, \dots, p-1$, 
    \begin{align*}
        \sum_{j = p - k + 1}^p a_j \ge \sum_{j = p - k + 1}^p b_j.
    \end{align*}
    \item [(c)] \textit{Identical total sums:} $\sum_{j =1}^p a_j = \sum_{j =1}^p b_j $.
\end{enumerate}
\end{definition}

Next, we introduce Karamata's inequality. 
\begin{lemma}[Karamata's inequality]\label{lem: Karamata's inequality}
    For $a, b \in \bbR^p$ satisfying $a \succeq b$, 
    \begin{equation*}
        \sum_{i=1}^p f(a_i) \ge \sum_{i=1}^p f(b_i)
    \end{equation*}
    holds for any convex function $f$. For a strictly convex function $f$, equality holds if and only if $a_i = b_i$ holds for all $1 \le i \le p$.
\end{lemma}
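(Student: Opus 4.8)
The plan is to prove the inequality by Abel summation (summation by parts), using the fact that the secant slopes of a convex function are monotone. Write $d_i = a_i - b_i$ and, for $k = 1, \dots, p$, introduce the tail sums $P_k = \sum_{j=k}^p d_j$, with the convention $P_{p+1} = 0$. Condition (c) gives $P_1 = 0$. For $k \ge 2$ the tail sum $P_k = \sum_{j=k}^p a_j - \sum_{j=k}^p b_j$ is exactly the difference of the partial sums of the $p-k+1 \le p-1$ largest entries, so condition (b) yields $P_k \ge 0$ for all $2 \le k \le p$. These nonnegativity facts, extracted directly from the majorization hypothesis, are the only place the hypothesis $a \succeq b$ is used.

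Next I would linearize each term. For every index $i$ with $a_i \ne b_i$, set the secant slope $c_i = \{f(a_i) - f(b_i)\}/(a_i - b_i)$, so that $f(a_i) - f(b_i) = c_i d_i$; when $a_i = b_i$ the contribution is $0$ and I assign $c_i$ to be any subgradient of $f$ at the common point $a_i$. Using $d_k = P_k - P_{k+1}$ and summing by parts,
\begin{equation*}
    \sum_{i=1}^p \big\{ f(a_i) - f(b_i) \big\} = \sum_{k=1}^p c_k (P_k - P_{k+1}) = c_1 P_1 + \sum_{k=2}^p (c_k - c_{k-1}) P_k = \sum_{k=2}^p (c_k - c_{k-1}) P_k,
\end{equation*}
where the last equality uses $P_1 = 0$. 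Since each $P_k \ge 0$, the whole expression is nonnegative once the slopes are shown to be non-decreasing, i.e. $c_{k-1} \le c_k$.

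Establishing the slope monotonicity is where convexity enters, and I expect it to be the main obstacle. Writing $u_k = \min(a_k, b_k)$ and $v_k = \max(a_k, b_k)$, the quantity $c_k$ is the secant slope of $f$ over $[u_k, v_k]$; because $a$ and $b$ are both sorted increasingly, both endpoints $u_k$ and $v_k$ are non-decreasing in $k$. The inequality $c_{k-1} \le c_k$ then follows from the standard three-slopes property of convex functions, namely that the divided difference $\{f(v) - f(u)\}/(v-u)$ is non-decreasing in each of $u$ and $v$ separately. The delicate points to verify are the degenerate cases $u_k = v_k$, where $c_k$ is a subgradient rather than a secant slope and one must check that the chosen subgradients interlace consistently with the neighboring secant slopes, and the situations where consecutive intervals $[u_{k-1}, v_{k-1}]$ and $[u_k, v_k]$ overlap or coincide.

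Finally, for the equality case under strict convexity I would argue the contrapositive. If $a \ne b$, then $d \ne 0$; from $P_1 = 0$ and $P_k \ge 0$ one checks that the last index $i_2 = \max\{ i : d_i \ne 0\}$ satisfies $P_{i_2} = d_{i_2} > 0$, and symmetrically the first nonzero difference is negative. Equality in the telescoped sum forces $(c_k - c_{k-1})P_k = 0$ for every $k$, hence $c_k = c_{k-1}$ on every index with $P_k > 0$. For a \emph{strictly} convex $f$ the divided difference is \emph{strictly} increasing as the endpoints move, so $c_k = c_{k-1}$ forces the intervals $[u_{k-1}, v_{k-1}]$ and $[u_k, v_k]$ to coincide; propagating these forced equalities across the block of indices where $P_k > 0$, together with the boundary conditions $P_1 = P_{p+1} = 0$, collapses $d$ to the zero vector, giving $a = b$. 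Isolating exactly which summand is guaranteed strictly positive, and ruling out degenerate cancellations in this propagation, is the finicky portion of the equality analysis, while the converse $a = b \Rightarrow$ equality is immediate.
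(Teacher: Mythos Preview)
The paper does not actually supply a proof of this lemma: Karamata's inequality is stated as a classical result and then immediately invoked in the proof of Lemma~\ref{lem: not minimum lemma}. So there is no ``paper's own proof'' to compare against.

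That said, your Abel-summation argument is the standard textbook proof, and the main inequality portion is correct. The key steps---extracting $P_1=0$ and $P_k\ge 0$ for $k\ge 2$ from the majorization hypothesis, writing $f(a_i)-f(b_i)=c_i d_i$ with $c_i$ a secant slope (or subgradient), summing by parts to obtain $\sum_{k\ge 2}(c_k-c_{k-1})P_k$, and then invoking the monotonicity of divided differences for convex $f$---are all sound. One small tightening: when $a_i=b_i$ you should not pick ``any'' subgradient but rather a canonical one (e.g.\ the right derivative $f'_+(a_i)$), so that $c_{k-1}\le c_k$ is automatic even when two consecutive intervals both degenerate to the same point.

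Your equality analysis is more of a plan than a proof, as you yourself flag. The observation that the last nonzero difference $d_{i_2}$ is positive (hence $P_{i_2}>0$) is correct and is the right starting point. A cleaner way to finish than the ``propagation'' you describe: since $P_{i_2}>0$ forces $c_{i_2}=c_{i_2-1}$, and since $a_{i_2}>b_{i_2}$ while $a_{i_2-1}\le a_{i_2}$ and $b_{i_2-1}\le b_{i_2}$, strict convexity makes the secant slope over $[b_{i_2},a_{i_2}]$ strictly exceed any secant or subgradient supported on $[u_{i_2-1},v_{i_2-1}]$ unless $u_{i_2-1}=b_{i_2}$ and $v_{i_2-1}=a_{i_2}$; iterating this down to the first index $i_1$ with $d_{i_1}\ne 0$ (where $d_{i_1}<0$, i.e.\ $a_{i_1}<b_{i_1}$) yields a contradiction because the interval orientation flips. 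This is essentially what you sketched, but pinning the contradiction to the sign change at $i_1$ avoids the vaguer ``ruling out degenerate cancellations.''
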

As a direct application of Karamata's inequality, we introduce the following lemma.

\begin{lemma}\label{lem: not minimum lemma}
    Let $\{a_i\}_{i=1}^p$ and $\{b_i\}_{i=1}^p$ be positive sequences satisfying 
    $\sum_{i=1}^p \log a_i = \sum_{i=1}^p \log b_i$,
    with $a_i \geq b_i$ for all $2 \leq i \leq p$. If $b_1$ is the smallest element in the sequence $\{ b_i\}_{i=1}^p$, $\sum_{i=1}^p a_i \geq \sum_{i=1}^p b_i$ holds.
    Here, the equality holds if and only if $a_i = b_i$ holds for all $ 1\le i \le p$.
\end{lemma}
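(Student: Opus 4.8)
The plan is to reduce the statement to Karamata's inequality (Lemma~\ref{lem: Karamata's inequality}) applied to the strictly convex function $f(x)=e^{x}$. I would set $\alpha_i=\log a_i$ and $\beta_i=\log b_i$, so that the hypothesis $\sum_i\log a_i=\sum_i\log b_i$ becomes $\sum_i\alpha_i=\sum_i\beta_i$, while the desired conclusion $\sum_i a_i\ge\sum_i b_i$ is exactly $\sum_i e^{\alpha_i}\ge\sum_i e^{\beta_i}$. By Lemma~\ref{lem: Karamata's inequality} it then suffices to show that, after sorting both sequences in increasing order, $\alpha$ majorizes $\beta$ in the sense of Definition~\ref{def: majorization}.

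The identical-total-sum condition (c) of majorization is immediate from the hypothesis. For the dominating-partial-sum condition (b), I would fix $1\le k\le p-1$ and write $\sum_{j=p-k+1}^{p}\beta_{(j)}=\max_{|S|=k}\sum_{i\in S}\beta_i$, the maximum running over index sets $S\subseteq[p]$ of size $k$. Because $b_1$ is the smallest element of $\{b_i\}$, we have $\beta_1=\min_i\beta_i$, so for $k\le p-1$ a maximizing set $S^{\star}$ can be chosen with $1\notin S^{\star}$: swapping index $1$ out of any size-$k$ set in favor of an index not already in it cannot decrease the sum, and such a spare index exists since $k<p$. Then $S^{\star}\subseteq\{2,\dots,p\}$, on which $\alpha_i\ge\beta_i$ by hypothesis, whence $\sum_{i\in S^{\star}}\beta_i\le\sum_{i\in S^{\star}}\alpha_i\le\max_{|S|=k}\sum_{i\in S}\alpha_i=\sum_{j=p-k+1}^{p}\alpha_{(j)}$. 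This establishes (b), so the sorted $\alpha$ majorizes the sorted $\beta$ and Karamata delivers $\sum_i a_i\ge\sum_i b_i$.

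For the equality clause, I would suppose $\sum_i a_i=\sum_i b_i$. Since $f=\exp$ is strictly convex, the equality part of Lemma~\ref{lem: Karamata's inequality} forces $\alpha_{(i)}=\beta_{(i)}$ for all $i$, i.e.\ $\{a_i\}_{i=1}^p$ and $\{b_i\}_{i=1}^p$ agree as multisets; in particular $\min_i a_i=\min_i b_i=b_1$. On the other hand, $\sum_i a_i=\sum_i b_i$ combined with $a_i\ge b_i$ for $i\ge2$ gives $a_1\le b_1$, while $a_1\ge\min_i a_i=b_1$; hence $a_1=b_1$, and then $\sum_{i\ge2}a_i=\sum_{i\ge2}b_i$ with the termwise bounds $a_i\ge b_i$ forces $a_i=b_i$ for every $i$. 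The converse direction is trivial.

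The only delicate point I anticipate is the argument inside condition (b) that a size-$k$ set attaining the $\beta$-partial-sum maximum may be taken to avoid index $1$; this is exactly where the hypothesis that $b_1$ is the smallest element enters, and it is precisely what breaks if that hypothesis is removed. Everything else is routine bookkeeping around the majorization definition and Karamata's inequality.
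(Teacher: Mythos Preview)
Your proof is correct and follows essentially the same route as the paper: pass to logarithms, show the sorted log-sequences satisfy the majorization of Definition~\ref{def: majorization} by using that $b_1=\min_i b_i$ allows a maximizing size-$k$ subset for $\beta$ to avoid index~$1$, and then apply Karamata with $f=\exp$. Your equality argument is in fact slightly more complete than the paper's terse appeal to Lemma~\ref{lem: Karamata's inequality}, since Karamata alone only yields equality of the sorted vectors (i.e.\ of multisets), and you supply the extra step that pins down $a_1=b_1$ and hence indexwise equality.
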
 
\begin{proof}
    Let $(a_1', \ldots, a_p')$ and $(b_1', \ldots, b_p')$ be the sorted vectors of $(a_1, \ldots, a_p)$ and $(b_1, \ldots, b_p)$ in a non-decreasing order, respectively. We shall show that
    \begin{equation*}
        (\log a_1', \ldots, \log a_p') \succeq (\log b_1', \ldots, \log b_p').
    \end{equation*}
    First, condition (a) of Definition \ref{def: majorization} directly holds. For $1 \le k \le p-1$, we have
    \begin{align*}
        \sum_{j=p-k+1}^p \log b_j' &= \max_{S \subset [p]: |S| = k} \sum_{s \in S} \log b_s = \max_{S \subset \{2, \ldots, p\}: |S| = k} \sum_{s \in S} \log b_s
        \\&\le \max_{S \subset \{2, \ldots, p\}: |S| = k} \sum_{s \in S} \log a_s \le \max_{S \subset [p]: |S| = k} \sum_{s \in S} \log a_s = \sum_{j=p-k+1}^p \log a_j'.
    \end{align*}
    The second equality holds because $b_1$ is the smallest element in $\{b_1, \ldots, b_p\}$.
    Lastly, $\sum_{i=1}^p \log b_i' = \sum_{i=1}^p \log a_i'$ holds from the assumption. Therefore, applying Lemma \ref{lem: Karamata's inequality} for $f(x) = \exp(x)$ on these two vectors, we obtain
    \begin{equation*}
        \sum_{j=1}^p a_j = \sum_{j=1}^p a_j' = \sum_{j=1}^p \exp(\log  a_j') \ge \sum_{j=1}^p \exp(\log b_j') = \sum_{j=1}^p b_j, 
    \end{equation*}
    concluding the proof. The equality condition is obtained from Lemma \ref{lem: Karamata's inequality}
\end{proof}

Next, we introduce differential entropy of a continuous random variable $\sX$. Combined with \eqref{eq: diff entropy of Gaussian}, differential entropy is helpful in simplifying the argument regarding the conditional variance of jointly Gaussian random variables.

\begin{definition}[Differential entropy]
    For a continuous random variable $\sX$ with probability density function $f_{\sX}$, differential entropy $h(\sX)$ is defined as follows.
    \begin{equation*}
        h(\sX) = - \int f_{\sX}(x) \log f_{\sX}(x) \, dx
    \end{equation*}
\end{definition}
For a Gaussian random vector $\sX \sim \mathcal{N}(\mu, \Sigma)$ where $\mu \in \bbR^p$ and $\Sigma \in \bbR^{p \times p}$,
\begin{equation}\label{eq: diff entropy of Gaussian}
    h(\sX) = 2^{-1} \log \big( (2\pi e)^p \det(\Sigma) \big)
\end{equation}
holds. For a Gaussian random vector, its differential entropy only depends on the variance. Similarly, the conditional differential entropy $h(\sX_1 \mid \sX_2)$ is defined as 
\begin{equation*}
    h(\sX_1 \mid \sX_2) = - \int f_{\sX_1, \sX_2} (x_1, x_2) \log f_{\sX_1 \mid \sX_2}(x_1 \mid x_2) \, dx_1 dx_2.
\end{equation*}
We introduce the following properties of differential entropy for continuous random variables $\sX_1, \ldots, \sX_p$.
\begin{lemma}[Properties of differential entropy]\label{lem: differential entropy properties}
    For continuous random variables $\sX_1, \ldots, \sX_p$, the following holds.
    \begin{enumerate}
    \item [(a)] \textit{Chain rule:} 
    \begin{equation*}
        h(\sX_1, \sX_2, \ldots, \sX_p)= \sum_{i=1}^p h(\sX_i \mid \sX_{i-1}, \ldots, \sX_1)
    \end{equation*}
    \item [(b)]
    \begin{equation*}
        h(\sX_1) \ge h(\sX_1 \mid \sX_2).
    \end{equation*}
\end{enumerate}
\end{lemma}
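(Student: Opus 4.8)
The plan is to derive both properties directly from the definitions of differential and conditional differential entropy, using only the factorization of densities and Jensen's inequality; no deeper machinery is required. For part (a), I would first prove the single-step identity
\begin{equation*}
    h(\sX_1, \ldots, \sX_k) = h(\sX_1, \ldots, \sX_{k-1}) + h(\sX_k \mid \sX_1, \ldots, \sX_{k-1})
\end{equation*}
for every $k \ge 2$, and then obtain the full chain rule by induction on $p$. This identity is nothing but the two-variable chain rule $h(U, V) = h(U) + h(V \mid U)$, whose proof is insensitive to the dimension of $U$, applied with the block $(\sX_1, \ldots, \sX_{k-1})$ in the role of $U$.

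To establish the two-variable chain rule, I would use the elementary factorization $f_{U, V}(u, v) = f_U(u)\, f_{V \mid U}(v \mid u)$, so that taking logarithms splits the integrand additively. Substituting into the definition of $h(U, V)$ and using linearity of the integral produces two terms: in the first, the factor $\log f_U(u)$ does not depend on $v$, so marginalizing out $v$ recovers exactly $-\int f_U(u) \log f_U(u)\, du = h(U)$; the second term coincides verbatim with the definition of $h(V \mid U)$. The induction step then applies this identity with $k = p$ and invokes the induction hypothesis on the $(p-1)$-dimensional joint entropy, while the base case $p = 1$ is trivial.

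For part (b), I would show the stronger fact that the gap equals a mutual information and is therefore nonnegative. Marginalizing $x_2$ lets me write $h(\sX_1) = -\int f_{\sX_1, \sX_2} \log f_{\sX_1}$, and subtracting $h(\sX_1 \mid \sX_2)$ together with the identity $f_{\sX_1 \mid \sX_2} = f_{\sX_1, \sX_2} / f_{\sX_2}$ gives
\begin{equation*}
    h(\sX_1) - h(\sX_1 \mid \sX_2) = \int f_{\sX_1, \sX_2} \log \frac{f_{\sX_1, \sX_2}}{f_{\sX_1}\, f_{\sX_2}}.
\end{equation*}
Negating and applying Jensen's inequality to the concave function $\log$ (equivalently, invoking the nonnegativity of the Kullback--Leibler divergence) yields
\begin{equation*}
    -\bigl(h(\sX_1) - h(\sX_1 \mid \sX_2)\bigr) = \int f_{\sX_1, \sX_2} \log \frac{f_{\sX_1}\, f_{\sX_2}}{f_{\sX_1, \sX_2}} \le \log \int f_{\sX_1}\, f_{\sX_2} = 0,
\end{equation*}
which is exactly $h(\sX_1) \ge h(\sX_1 \mid \sX_2)$.

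The main obstacle is not conceptual but one of well-posedness: splitting the integrals and marginalizing invoke Fubini's theorem, and the subtraction of entropy terms is legitimate only when the relevant differential entropies are finite, so that no $\infty - \infty$ ambiguity arises. In the paper's setting every random vector of interest is jointly Gaussian with nondegenerate covariance, hence has a strictly positive smooth density and finite differential entropy by~\eqref{eq: diff entropy of Gaussian}; I would record this at the outset to justify each interchange of integration and each cancellation. Should the equality case be needed, it follows from the strict concavity in Jensen's inequality, which forces $f_{\sX_1, \sX_2} = f_{\sX_1} f_{\sX_2}$ almost everywhere, i.e.\ independence of $\sX_1$ and $\sX_2$.
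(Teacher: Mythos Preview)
Your proposal is correct. The paper, however, does not give any argument of its own: its entire proof of this lemma is a one-line reference to Theorem~2.5.1 of Cover and Thomas, \emph{Elements of Information Theory}. What you have written is essentially the standard textbook derivation that such a citation points to (factorization of the joint density for the chain rule, nonnegativity of Kullback--Leibler divergence via Jensen for conditioning reduces entropy), so in substance your approach and the cited source coincide; you have simply unpacked the reference rather than invoked it.
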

\begin{proof}
    We refer to Theorem 2.5.1 of \cite{cover1999elements}.
\end{proof}

From now on, we assume, without loss of generality, that $\sigma^* = \iota = (1, 2, \ldots, p)$ is the true ordering of the true causal DAG $G^*$ defined in \eqref{eq:str_eq_individual}. 
Recall that we define $\Pred_j(\sigma) = \{\sigma(k): k < \sigma^{-1}(j) \}$ as the set of predecessors of node $j$ under the ordering $\sigma$ and  $\omega^{\sigma}_j = \Var(\sX_{j} \mid \sX_{\Pred_j(\sigma)})$ for  each permutation $\sigma$.
Let $\omega_i = \omega_{i}^{\sigma^*}$ for notational convenience.
We introduce the following lemma, which plays a key role in the proof of Theorem~\ref{thm:increasing}.

\begin{lemma}\label{lem: log variance majorization}
    Assume that the error variances are weakly increasing in the true ordering, that is, $\omega_1 \le \omega_2 \le \cdots \le \omega_p$.
    For any given permutation $\sigma \in \bbS^p$, let $(v_1,\ldots, v_p)$ denote the vector obtained by sorting the set $\{\omega_1^{\sigma}, \omega_2^{\sigma}, \dots, \omega_p^{\sigma}\}$ in a non-decreasing order. Then, we have
    \begin{equation*}
        (\log v_1, \dots, \log v_p) \succeq (\log \omega_1, \dots, \log \omega_p).
    \end{equation*}
\end{lemma}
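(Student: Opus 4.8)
The plan is to establish the three defining conditions of the majorization $(\log v_1,\dots,\log v_p) \succeq (\log\omega_1,\dots,\log\omega_p)$ from Definition~\ref{def: majorization}, translating everything into the language of differential entropy via \eqref{eq: diff entropy of Gaussian}. The key observation is that, for each permutation $\sigma$, the product $\prod_{j=1}^p \omega_j^\sigma$ equals $\det(\Sigma^*)$, since $\Sigma^* = (I - B_\sigma^\T)^{-1}\Omega_\sigma^*(I - B_\sigma)^{-1}$ and $\det(I - B_\sigma) = 1$ (as $B_\sigma$ is strictly triangular in the $\sigma$-ordering). Equivalently, by the chain rule (Lemma~\ref{lem: differential entropy properties}(a)), $\sum_{j=1}^p \tfrac12\log\big((2\pi e)\,\omega_j^\sigma\big) = \sum_{j=1}^p h(\sX_{\sigma(j)} \mid \sX_{\sigma(1)},\dots,\sX_{\sigma(j-1)}) = h(\sX_1,\dots,\sX_p)$, which is independent of $\sigma$. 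This immediately gives condition (c): $\sum_j \log v_j = \sum_j \log\omega_j$.

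Condition (a) is immediate by construction, since $(v_1,\dots,v_p)$ is the sorted version of $\{\omega_j^\sigma\}$ and $(\omega_1,\dots,\omega_p)$ is already sorted by the weakly-increasing assumption. The substantive part is condition (b): for every $k \in \{1,\dots,p-1\}$, $\sum_{j=p-k+1}^p \log v_j \ge \sum_{j=p-k+1}^p \log\omega_j$. I would rewrite the left side as a maximum over $k$-subsets: $\sum_{j=p-k+1}^p \log v_j = \max_{S\subset[p],\,|S|=k} \sum_{s\in S}\log\omega_{\sigma^{-1}(\cdot)}^\sigma$, i.e. the sum of the $k$ largest elements of $\{\omega_j^\sigma\}$. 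The right side is $\sum_{j=p-k+1}^p \log\omega_j = \log\prod_{j=p-k+1}^p \omega_j$. Choosing the particular subset $S$ corresponding to the last $k$ indices $\{p-k+1,\dots,p\}$ in the \emph{true} ordering, it suffices to show
\begin{equation*}
    \sum_{j=p-k+1}^p \log\big(\text{the }\sigma\text{-conditional variance index-matched to }j\big) \ge \sum_{j=p-k+1}^p \log\omega_j,
\end{equation*}
but more cleanly: both sides are differential entropies of marginal/conditional blocks. Precisely, $\sum_{j=p-k+1}^p \log\omega_j$ corresponds (up to the common additive constant $\tfrac{k}{2}\log(2\pi e)$ and factor $2$) to $\sum_{j=p-k+1}^p h(\sX_j\mid \sX_1,\dots,\sX_{j-1}) = h(\sX_{p-k+1},\dots,\sX_p \mid \sX_1,\dots,\sX_{p-k})$ by the chain rule. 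On the other side, the sum of the $k$ largest $\log\omega_j^\sigma$ dominates $\sum_{j\in T}\log\omega_j^\sigma$ for \emph{any} size-$k$ set $T$; taking $T$ to be the last $k$ positions of $\sigma$, $\sum_{j\in T}\log\omega_j^\sigma$ corresponds to $h(\text{last }k\text{ variables of }\sigma \mid \text{first }p-k\text{ variables of }\sigma)$, but that is not obviously comparable. The right choice is instead to use conditioning-reduces-entropy (Lemma~\ref{lem: differential entropy properties}(b)): for any set $A = \{a_1 < \dots < a_k\}$ of the last $k$ true-order indices and its sorted listing, $\sum_{i=1}^k h(\sX_{a_i}\mid \sX_{a_1},\dots,\sX_{a_{i-1}}) = h(\sX_A) \ge h(\sX_A \mid \sX_{[p]\setminus A})$, and the latter telescopes via the chain rule into a sum of $\sigma$-conditional entropies when $\sigma$ is chosen to list $[p]\setminus A$ first and then $A$; comparing term by term against the $k$ largest $\omega_j^\sigma$ (which dominate any such selection) closes the inequality.

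The main obstacle, and the step I expect to require the most care, is the combinatorial bookkeeping in condition (b): correctly identifying which differential-entropy quantity each partial sum represents, and then invoking Lemma~\ref{lem: differential entropy properties}(b) in the right direction so that the marginal-block entropy of the true model's ``last $k$'' variables is bounded above by a sum of $k$ of the $\omega_j^\sigma$'s — hence by the $k$ largest, which is $\sum_{j=p-k+1}^p \log v_j$. Concretely, one should fix $A$ = the set $\{p-k+1,\dots,p\}$ (last $k$ in the true ordering), write $\sum_{j\in A}\log\omega_j = 2\,h(\sX_A) - k\log(2\pi e)$ using that the $\sX_j$ for $j\in A$ have a submatrix of $\Sigma^*$ whose modified Cholesky in the induced true order has diagonal exactly $\{\omega_j\}_{j\in A}$, then bound $h(\sX_A) \le \sum_{j\in A} h(\sX_j \mid \sX_{\Pred_j(\sigma)})$ for a well-chosen $\sigma$ that places $A$ last (so each such conditional entropy is $\tfrac12\log((2\pi e)\omega_j^\sigma)$), and finally note $\sum_{j\in A}\log\omega_j^\sigma \le \sum_{j=p-k+1}^p \log v_j$ because the right side is the sum of the $k$ largest among $\{\omega_j^\sigma\}$ and $|A| = k$. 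Condition (c) and condition (a) are then routine, and the lemma follows.
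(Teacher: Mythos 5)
Your overall strategy matches the paper's: conditions (a) and (c) are handled the same way (condition (c) via $\det(I-B_\sigma)=1$, equivalently the entropy chain rule), and for condition (b) you correctly reduce the problem to comparing $\sum_{j\in A}\log\omega_j$, with $A=\{p-k+1,\dots,p\}$, against the sum of the $k$ largest $\log\omega_j^\sigma$. However, the concrete chain you give for that comparison is wrong in two places, and the gap is real rather than cosmetic. First, the identity $\sum_{j\in A}\log\omega_j = 2h(\sX_A)-k\log(2\pi e)$ is false: the modified Cholesky of the submatrix $\Sigma^*_{AA}$ has diagonal entries $\Var(\sX_{a_i}\mid\sX_{a_1},\dots,\sX_{a_{i-1}})$, which condition only on the earlier elements of $A$, not on all of $\{1,\dots,a_i-1\}$; these are in general strictly larger than $\omega_{a_i}$. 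The correct identity is $\sum_{j\in A}\log\omega_j = 2h(\sX_A\mid\sX_{[p]\setminus A})-k\log(2\pi e)$, by the chain rule applied in the true ordering. Second, the inequality $h(\sX_A)\le\sum_{j\in A}h(\sX_j\mid\sX_{\Pred_j(\sigma)})$ is backwards: listing $A$ in $\sigma$-order as $a_1,\dots,a_k$, each $\Pred_{a_i}(\sigma)$ contains $\{a_1,\dots,a_{i-1}\}$, so conditioning-reduces-entropy gives $\sum_i h(\sX_{a_i}\mid\sX_{\Pred_{a_i}(\sigma)})\le\sum_i h(\sX_{a_i}\mid\sX_{a_1},\dots,\sX_{a_{i-1}})=h(\sX_A)$. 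After the normalization $x\mapsto 2x-k\log(2\pi e)$, the unconditional block entropy $h(\sX_A)$ sits \emph{above} both $\sum_{j\in A}\log\omega_j$ and $\sum_{j\in A}\log\omega_j^\sigma$, so it cannot serve as an intermediate quantity separating them.

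The fix is the one your middle paragraph brushes against and then abandons: route the argument through $h(\sX_A\mid\sX_{[p]\setminus A})$ rather than $h(\sX_A)$. Since $\Pred_{a_i}(\sigma)\subseteq([p]\setminus A)\cup\{a_1,\dots,a_{i-1}\}$ when $A$ is listed in $\sigma$-order, conditioning-reduces-entropy gives $h(\sX_{a_i}\mid\sX_{\Pred_{a_i}(\sigma)})\ge h(\sX_{a_i}\mid\sX_{[p]\setminus A},\sX_{a_1},\dots,\sX_{a_{i-1}})$; summing and applying the chain rule yields $\sum_{j\in A}\log\omega_j^\sigma\ge 2h(\sX_A\mid\sX_{[p]\setminus A})-k\log(2\pi e)=\sum_{j\in A}\log\omega_j$, and combining with the trivial domination by the sorted top-$k$ sum finishes condition (b). This is exactly what the paper does, phrased via an auxiliary permutation $\tau$ that moves the elements of $A$ after the intervening non-$A$ elements while preserving relative orders. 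Your parenthetical claim that the $k$ largest $\omega_j^\sigma$ ``dominate any such selection'' is where this missing predecessor-set monotonicity is silently assumed: the telescoped terms of $h(\sX_A\mid\sX_{[p]\setminus A})$ are not elements of $\{\omega_1^\sigma,\dots,\omega_p^\sigma\}$, so domination by the sorted top-$k$ does not apply to them directly.
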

\begin{proof}
Our objective is to verify that all the conditions in Definition~\ref{def: majorization} are satisfied. By the definition of  $(v_1,\ldots, v_p)$ and the assumption on $(\omega_1,\ldots, \omega_p)$, condition (a) holds. Condition (c) follows from the observation that
\begin{align*}
    \sum_{j=1}^p \log v_j &= \sum_{j=1}^p 2h(\sX_j \mid \sX_{\Pred_j(\sigma)}) - \frac{p}{2}\log(2\pi e)
    \\& =2h(\sX_1, \ldots, \sX_p) - \frac{p}{2} \log (2\pi e)
    \\& = \sum_{j=1}^p 2h(\sX_{j} \mid \sX_{1}, \ldots, \sX_{j-1}) - \frac{p}{2}\log(2\pi e)
    \\& = \sum_{j=1}^p \log \omega_j.
\end{align*}
Here, we use~\eqref{eq: diff entropy of Gaussian}, and the second and the third equalities are obtained from Lemma \ref{lem: differential entropy properties} (a). Since $(\log v_1, \ldots, \log v_p)$ is the non-decreasing rearrangement of the set $\{\omega_1^{\sigma}, \omega_2^{\sigma}, \dots, \omega_p^{\sigma}\}$, we have
\begin{equation*}
    \sum_{j=p-k+1}^p \log v_j \ge \sum_{j=p-k+1}^p \log \omega^\sigma_j
\end{equation*}
for all $1 \le k \le p$. 
To show that condition (b) is satisfied, it suffices to verify
\begin{equation}\label{eq: log u log omega comparison}
    \sum_{j=p-k+1}^p \log \omega^\sigma_j \ge \sum_{p-k+1}^p \log \omega_j,
\end{equation}
for all $1 \le k \le p$. For a fixed $k \in [p]$,  let $\llow$ and $\lupp$  denote the smallest and the largest indices, respectively, at which the nodes in the set $R = \{p-k+1, \dots, p \}$ appear in a given ordering $\sigma$. For example, let $p = 6$, $k = 2$ and $\sigma = (3,6,4,1,5,2)$. Then, the node set $R$ is $\{5, 6\}$, and we have 
\begin{align*}
    \llow &= \min\{j : \sigma(j) = i, \text{ for } i \in R \} = 2, \\
    \lupp &= \max\{j : \sigma(j) = i, \text{ for } i \in R \} = 5.
\end{align*} 
We now define a permutation $\tau$ induced from $\sigma$ as follows. For all $i < \llow$ and $i > \lupp$, We set $\tau(i) = \sigma(i)$. Let $S = \{\sigma(i) :\llow  \le i \le \lupp\}$ denote the set of nodes appearing between positions $\llow$ and $\lupp$ in the ordering $\sigma$. Then, by definition, $R \subseteq S$. Let $L = S \setminus R$; then $L \subseteq \{1, \dots, p-k \}$. We define $\tau$ by rearranging the elements of $S$ so that all elements of $L$ appear before those of $R$, while preserving the relative order within each set. In the previous example, we have $S = \{1,4, 5, 6\}$ with $L = \{1, 4\}$, and $\tau = (3,4,1,6, 5, 2)$. See the illustration in Fig.~\ref{fig:tau}. 
\begin{figure}[h]
    \centering
    \includegraphics[width=0.98\linewidth]{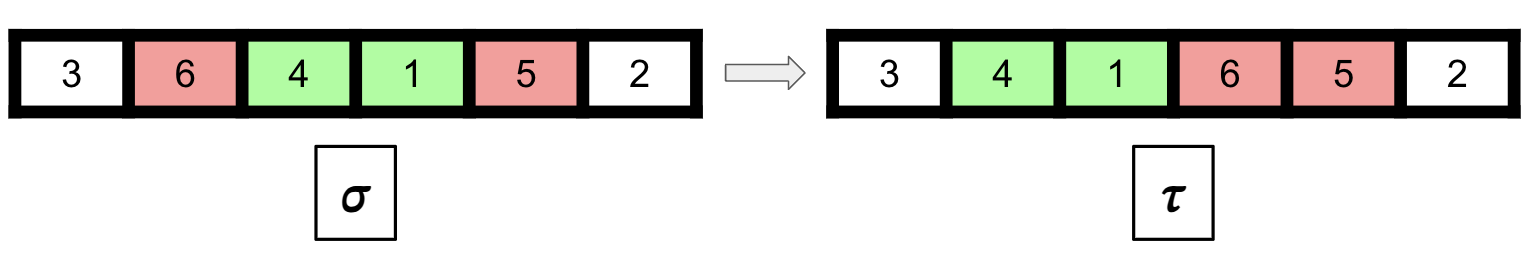}
    \caption{An illustration of the example with $p = 6$, $k = 2$ and $\sigma = (3,6,4,1,5,2)$. The set $L$ is colored green, and $R$ is colored red. }
    \label{fig:tau}
\end{figure}

For mathematical rigor, we formally define $\tau$ by these conditions
\begin{itemize}
    \item [1)] for $\llow \le i_1, i_2 \le\lupp$, $\sigma(i_1) < \sigma(i_2) \le p-k$, $\tau(i_1) < \tau(i_2)$, 
    \item [2)]  for $\llow \le i_1, i_2 \le\lupp$, $p-k+1 \leq \sigma(i_1) < \sigma(i_2)$, $\tau(i_1) < \tau(i_2)$, 
    \item [3)] for $\llow \le i_1, i_2 \le\lupp$, $\sigma(i_1) \le p-k < \sigma(i_2), \tau(i_1) < \tau(i_2)$.
\end{itemize}
Therefore, we can write $\tau(i)$ for $\llow \le i \le \lupp$ as follows.
\begin{align*}
    \tau \big (i - \sum_{p-k+1 \le a \le p} \ind_{\{\sigma^{-1}(a) \le i\}} \big) = \sigma(i), & \;\sigma(i) \le p-k,\\
    \tau \big (\lupp - k + \sum_{p-k+1 \le a \le p} \ind_{\{\sigma^{-1}(a) \le i\}} \big) = \sigma(i), & \;\sigma(i) \ge p-k+1,
\end{align*}
where $\ind$ denotes the indicator function.
By construction, $\Pred_{j}(\sigma) \subset \Pred_{j}(\tau)$ for $p-k+1 \le j \le p$. We show that equation~\eqref{eq: log u log omega comparison} holds by the following derivation.
\begin{align*}
    \sum_{j = p - k + 1}^p \log \omega^\sigma_j & = \sum_{j = p - k + 1}^p \log \Var(\sX_j \mid \sX_{\Pred_{j}(\sigma)}) \\
    & = \sum_{j = p - k + 1}^p 2h(\sX_j \mid \sX_{\Pred_{j}(\sigma)}) - k\log(2\pi e) \quad (\because \eqref{eq: diff entropy of Gaussian})\\
    & \ge \sum_{j = p - k + 1}^p 2h(\sX_j \mid \sX_{\Pred_{j}(\tau)}) - k\log(2\pi e) \quad (\because \text{Lemma \ref{lem: differential entropy properties} (b)})\\
    & = 2h \big(\sX_{p-k+1}, \sX_{p-k+2}, \ldots, \sX_p \mid \sX_1, \sX_2, \ldots, \sX_{p-k} \big) - k\log(2\pi e) \quad (\because \text{Lemma \ref{lem: differential entropy properties} (b)} )\\
    & = \sum_{j = p - k + 1}^p 2h(\sX_j \mid \sX_1, \sX_2, \ldots, \sX_{j-1}) - k\log(2\pi e) \quad (\because \text{Lemma \ref{lem: differential entropy properties} (a)})
    \\& = \sum_{j = p - k + 1}^p \log \mathrm{Var}(\sX_j | \sX_1, \dots, \sX_{j-1}) \\
    & = \sum_{j = p - k + 1}^p \log \omega_j,
\end{align*}
which concludes the proof.
\end{proof}

\subsection{Proof of Theorem~\ref{thm:increasing}}\label{subsec:thm1}

We introduce some notation. We say that $k_1 \rightarrow k_2 \rightarrow \cdots \rightarrow k_m $ is a direct path in $G^*$ if $k_{i+1} \in \Pa_{k_i}(G^*)$ for all $i \in [m-1]$. We define the descendent set of node $j$ in $G^*$ as
\begin{align*}
    \Des_j(G^*) = \{\ell: \text{ there exists a direct path from } j \text{ to } \ell \text{ in } G^* \}.
\end{align*}
Without loss of generality, we assume the true ordering $\sigma^* = \iota = (1, \dots, p)$ with $\omega_1 \le \dots \le \omega_p$. By combining the result of Lemma \ref{lem: log variance majorization} and Lemma \ref{lem: Karamata's inequality} with convex function $f(x) = \exp(x)$, we obtain
\begin{equation*}
    \sum_{j = 1}^{p} \omega_j = \sum_{j = 1}^{p} \exp (\log \omega_j) \leq \sum_{j = 1}^{p} \exp (\log v_j) = \sum_{j = 1}^{p} v_j = \sum_{j = 1}^{p} \omega_j^\sigma,
\end{equation*}
for any permutation $\sigma$. Therefore, $\sigma^*$ minimizes the trace. Furthermore, we can prove that $\sum_{j=1}^p \omega_j^{\sigma}$ is minimized if and only if $\sigma \in [\sigma^*]$, by using the equality condition in Lemma~\ref{lem: Karamata's inequality}, which implies that equality holds if and only if $\omega_j = v_j$ holds for all $j \in [p]$.
For any $\sigma \in [\sigma^*]$, we have $\sum_{j = 1}^{p} \omega_j  = \sum_{j = 1}^{p} \omega_j^\sigma $, since $\omega_j^{\sigma} = \omega_j$ for all $j \in [p]$. To see this, we have
\begin{align*}
    \omega_j^\sigma = \Var(\sX_j \mid \sX_{\Pred_j(\sigma)}) = \Var(\sX_j \mid \sX_{\Pa_j(G^*)})
\end{align*}
by the definition of a consistent ordering. We refer to any $\sigma \in [\sigma^*]$ as a minimum-trace permutation. 

On the other hand, for any $\sigma \notin [\sigma^*]$, there exists  an edge $j \rightarrow k \in G^*$ with $\sigma^{-1}(k) < \sigma^{-1}(j)$. We define 
\begin{align*}
    k^* = \max \{k \in [p]: \text{ there exists } j \text{ such that } j\to k \in G^*, \sigma^{-1}(k) < \sigma^{-1}(j)\}.
\end{align*}
Then, for $\ell > k^*$, we have $\omega_\ell = \omega_\ell^\sigma = v_\ell$. 
Now, we claim that $\omega_{k^*}  < \omega_{k^*}^\sigma$. 
We define another ordering $\sigma'$ such that $(\sigma')^{-1}(k^*) < (\sigma')^{-1} (\ell)$ only if $\ell \in \Des_{k^*}(G^*)$, that is, only descendant nodes of $k^*$ will appear after $k^*$ in $\sigma'$. This also means that the predecessors of node $k^*$ under the ordering $\sigma'$ are non-descendant nodes, that is,  
\begin{align*}
    \Pred_{k^*}(\sigma') = \NonDes_{k^*}(G^*) := [p] \setminus (\Des_{k^*}(G^*)\cup \{ k^*\} ).
\end{align*}
By the local Markov property, which states that each node is conditionally independent of its non-descendants given its parents, 
\begin{align*}
    \omega_{k^*}^{\sigma'} = \Var(\sX_{k^*} \mid  \sX_{\Pred_{k^*}(\sigma')}) = \Var(\sX_{k^*} \mid  \sX_{\NonDes_{k^*}(G^*) }) =  \Var(\sX_{k^*} \mid \sX_{\Pa_{k^*}(G^*)}) = \omega_{k^*}.
\end{align*}

Next, we show that $\Pred_{k^*}(\sigma) \subsetneq \Pred_{k^*}(\sigma')$.  To see this, observe that $\Des_{k^*}(G^*) \cap \Pred_{k^*}(\sigma) = \emptyset$; otherwise there would exist a node $k > k^*$ and  an edge $j \rightarrow k \in G^*$ with $\sigma^{-1}(k) < \sigma^{-1}(j)$. This would contradict the definition of $k^*$ as the maximum element. To see the inclusion is strict, we use the definition of $k^*$: there exists a node $j^*$ such that $j^* \rightarrow k^*$ with $ \sigma^{-1}(k^*) < \sigma^{-1}(j^*)$. This implies that  
$\Pred_{k^*}(\sigma)$ is missing 
at least  one parent of $k^*$; specifically, $\Pred_{k^*}(\sigma') \setminus \{j^*\} \subseteq \Pred_{k^*}(\sigma)$. Therefore,
\begin{align*}
    \omega_{k^*} = \Var(\sX_{k^*} \mid  \sX_{\Pred_{k^*}(\sigma')}) < \Var(\sX_{k^*} \mid  \sX_{\NonDes_{k^*}(G^*) \setminus \{j^*\}}) \leq  \Var(\sX_{k^*} \mid \sX_{\Pred_{k^*}(\sigma)}) = \omega_{k^*}^\sigma,
\end{align*}
where the first inequality holds because $j^*$ is a parent of $k^*$.

Finally, if two vectors are equal when sorted in increasing order, their $m$-th smallest elements must be identical for all $m \in [p]$. Using the previous result, we have $\omega_k = \omega_k^\sigma$, for $k > k^*$. The $k^*$-th smallest element $\omega_{k^*}^\sigma$ for the ordering $\sigma$ is strictly greater than $\omega_{k^*}$, and since $\omega_{k} < \omega_{k^*} $ for $k < k^*$, $\omega_{k}$ cannot be equal to $\omega_{k^*}^\sigma$. This  implies that the two sorted vectors are not identical. Therefore, for any $\sigma \notin [\sigma^*]$, $\sum_{j = 1}^{p} \omega_j  < \sum_{j = 1}^{p} \omega_j^\sigma $ by using the equality condition in Lemma~\ref{lem: Karamata's inequality}. We can readily verify that the weighted adjacency matrix $B^*_\sigma$ encodes the same DAG, so we obtain the unique minimum-trace DAG, $G^*_\sigma = G^*_{\sigma^*}$. By definition, $G^*_{\sigma^*} = G^*$, the true DAG. which completes the proof.

\subsection{Proof of Theorem~\ref{thm:computation}}\label{subsec:thm2}

Similar to the proof of Theorem~\ref{thm:increasing}, we assume the true ordering $\sigma^* = \iota = (1, \dots, p)$ with $\omega_1 \le \dots \le \omega_p$.
For a permutation $\sigma$, we consider the permutation $\tau = \RtoR(\sigma, i, j)$.  Then we obtain
\begin{equation*}
    f(\sigma) - f(\tau) = \sum_{k=i}^j \omega_{\sigma(k)}^\sigma - \omega_{\tau(k)}^{\tau},
\end{equation*}
since $\omega_{\sigma(k)}^\sigma = \omega_{\tau(k)}^{\tau}$ holds for all $k < i$ and $k > j$. 
Next, we define $a_k = \omega^{\sigma}_{\sigma(k)}$ and $b_k = \omega_{\sigma(k)}^{\tau}$ for $i \le k \le j$.
From the definition of $\tau$, 
\begin{align*}
    b_k & = \Var(\sX_{\sigma(k)} \mid \sX_{\sigma(j)}, \sX_{\Pred_{\sigma(k)}(\sigma)}) \text{ for } i \le k \le j-1, \\
    b_j &= \Var(\sX_{\sigma(j)} \mid \sX_{\Pred_{\sigma(i)}(\sigma)}).
\end{align*}
From Lemma \ref{lem: differential entropy properties} (a) and \eqref{eq: diff entropy of Gaussian}, we obtain
\begin{align*}
    \sum_{k=i}^j \log a_k = \sum_{k=i}^j \log w^\sigma_{\sigma(k)} &= 
    2 \sum_{k=i}^j h(\sX_{\sigma(k)} \mid \sX_{\Pred_{\sigma(k)}(\sigma)}) - (j - i + 1)\log (2\pi e)
    \\&= 
    2h(\sX_{\sigma(i)}, \dots, \sX_{\sigma(j)} \mid 
    \sX_{\sigma(1)}, \cdots, \sX_{\sigma(i-1)} ) - (j - i + 1)\log (2\pi e)
    \\&= 2 \sum_{k=i}^{j-1} h(\sX_{\sigma(k)} \mid \sX_{\Pred_{\sigma(k)}(\tau)}) + 2h(\sX_{\sigma(j)} \mid  \sX_{\Pred_{\sigma(j)}(\tau)}) - (j - i + 1)\log (2\pi e)
    \\&= \sum_{k=i}^j \log w^\tau_{\sigma(k)} = \sum_{k=i}^j \log b_k.
\end{align*}
Additionally, from Lemma \ref{lem: differential entropy properties} (b), we obtain $a_k \ge b_k$ for all $i \le k \le j-1$. Now, consider $\sigma \notin [\sigma^*]$. In this case, we can find an index $i$ such that $\sigma(i) > i$. Let $i^*$ be the first such node. Let $j^* = \sigma^{-1}(i^*)$. We show that for $\tau = \RtoR(\sigma, i^*, j^*)$, we have $f(\sigma) - f(\tau) \geq 0$. To this end, we verify that the assumptions required to invoke  Lemma~\ref{lem: not minimum lemma}. We have shown that $\sum_{k=i^*}^{j^*}\log a_k  = \sum_{k=i^*}^{j^*} \log b_k $, and $a_k \geq b_k$ for all $i^* \leq k \leq j^*-1$. Now it is suffice to show that $b_{j^*}$ is the smallest element among $b_k$ for $i^* \leq k \leq j^*-1$. For $i^* \leq k \leq j^*-1$, we have
\begin{align*}
     b_{j^*} &= \Var(\sX_{i^*} \mid \sX_{\Pred_{j^*}(\sigma)}) \\
     &= \Var(\sX_{i^*} \mid \sX_{1}, \dots, \sX_{i^*-1}) 
     \\
     & \leq \Var(\sX_{\sigma(k)} \mid \sX_{\sigma(1)}, \dots, \sX_{\sigma(k-1)}, \sX_{\sigma(k+1)}, \dots, \sX_{\sigma(p)}) \\
     & \leq \Var(\sX_{\sigma(k)} \mid \sX_{i^*}, \sX_{\Pred_{\sigma(k)}(\sigma)} ) = b_k.
\end{align*}
Here, the second inequality follows from the assumption on $i^*$, and the first inequalty follows from the assumption~\eqref{eq:assumption1}. The last inequality is from  the fact that conditioning on additional variables cannot increase the conditional variance. Applying Lemma~\ref{lem: not minimum lemma}, we have  $\sum_{k=i^*}^{j^*} a_k  \geq \sum_{k=i^*}^{j^*} b_k $, which implies that  $f(\sigma) -f(\tau) \geq 0$. Since we have identified an element $\tau \in \cN_{\RtoR}(\sigma)$, so $\sigma$ cannot be a strict local optimum under the hill climbing algorithm with the R2R neighborhood.

\section{Proof of Corollary~\ref{cor:aragam}}\label{sec:aragam}  

\noindent
\textbf{Model specification.}
We consider a setting considered in~\cite{aragam2019globally}. Let $\mathbb{D}_p$ be the collection of weighted adjacency matrices of $p$-vertex DAGs. Suppose that a random vector $\sX$ satisfies the structural equation model $\sX = (B^*)^\T \sX + \varepsilon$, where   $\varepsilon \sim \mathrm{MVN}_p (0, \Omega^*)$, $B^* \in \bbD_p$, and $ \Omega^*$ is a positive diagonal matrix of size $p \times p$ satisfying weakly increasing in its causal ordering. It follows that $\sX \sim \mathrm{MVN}_p(0, \Sigma^*)$ where $\Sigma^* = \Sigma(B^*, \Omega^*)$ as defined in~\eqref{eq:MCD}.
Moreover, the true underlying DAG $G^*$ is the one encoded by $B^*$.
Let $X \in \bbR^{n \times p}$ denote the data, where each row is an independent copy of $\sX$. We consider the following optimization problem
\begin{equation*}
    \hB \in \argmin_{B \in \bbD_p} Q(B), \quad Q(B) = \frac{1}{2n}\|X - XB\|_F^2 + \rho_{\lambda}(B),
\end{equation*}
where $\| \cdot \|_F$ denotes the Frobenius norm, and $\rho_{\lambda}$ is the minimax concave penalty~\citep{zhang2010nearly}. As stated in the main text, there may exist multiple pairs $(B', \Omega')$ that satisfy $\Sigma^* = \Sigma (B', \Omega')$. We define $\mathcal{D}(\Sigma^*) = \{(B', \Omega'):  \Sigma^* = \Sigma (B', \Omega')\}$ as the collection of all such pairs. In practice, to reduce computational complexity, we often estimate an undirected skeleton, which is referred to as a superstructure, and restrict the DAG candidates to those whose undirected edges are subsets of the superstructure~\citep{perrier2008finding}. We assume the scenarios that a superstructure $\Gamma$ of $G^*$ is available. Thus, the objective becomes
\begin{align*}
    \hB \in \argmin_{B \in \bbD_{\Gamma}} Q(B),
\end{align*}
where $\bbD_{\Gamma} = \{B \in \bbD_p: \mathrm{skeleton}(G) \subseteq \Gamma \text{ for the DAG } G \text{ encoded by } B\}.$

\noindent
\textbf{Assumptions for high-dimensional analysis.}
Let $s = s(\Gamma)$ denote the maximum degree of the superstructure $\Gamma$, and define $\eta = \gamma_1 [1 + 6 \kappa(\Sigma; s) \gamma_2]$, where
\begin{align*}
\gamma_1  &= 4 \sqrt{\frac{ s \log (3ep/s) + \log p }{n}}, \\
\gamma_2  &= \left( 1 + 3 \sqrt{ \frac{ 2s \log (ep/s) }{n} } \right)^2, \\
\kappa(\Sigma; s) & = \frac{\sup_{S: |S| = 2s+2} \lmax(\Sigma^*_{SS})}{\inf_{S: |S| = s+1} \lmin(\Sigma^*_{SS})}, 
\end{align*}
where $\lmax(A)$ and $\lmin(A)$ denote the largest and smallest eigenvalues of the matrix $A$, respectively. Here is the list of assumptions. 
\vspace{1mm}

(B1) (Restricted eigenvalue condition) All eigenvalues of $\Sigma^*$ are bounded between constants $\underline{\nu}$ and $\overline{\nu}$,
\begin{align*}
    0< \underline{\nu} \leq \lmax(\Sigma^*) \leq \lmax(\Sigma^*)  \leq  \overline{\nu} < \infty.
\end{align*}

(B2) (Sample size) $s \log(p/s)+ \log p = o(n).$

(B3) (Regularization parameter)  $\lambda \gtrsim \sqrt{\log p/n}$ and  $\lambda > \eta$.

(B4) (Sparsity of the true graph) $|G^*| \lesssim p\sqrt{n/(s \log(p/s)+ \log p)}.$ 

(B5) (Beta-min condition) $\min\{|B^*_{ij}| : B^*_{ij} \neq 0\} \gtrsim \lambda $. 

\begin{proof}[Proof of Corollary~\ref{cor:aragam}]
Define the quantity
\[
\mathrm{gap}(\Sigma^*) := \inf \left\{ \operatorname{tr} \Omega' - \operatorname{tr} \Omega^* : \Omega' \neq \Omega^*,\ (B',  \Omega') \in \mathcal{D}(\Sigma^*) \right\}.
\]
By the gap condition in~\eqref{eq:gap}, $\mathrm{gap}(\Sigma^*) \ge \underline{\nu} \xi p$. With the assumption (B4), it satisfies Condition~3.1 of~\citet{aragam2019globally}. By Theorem~3.1 of~\citet{aragam2019globally}, we conclude the proof.    
\end{proof}

\section{Proof of Corollary~\ref{cor:woody}}\label{sec:woody}  

\noindent
\textbf{Model specification.} We first describe the notation and  setting for the Bayesian structure learning problem considered in~\cite{chang2024order}.  For each \(\sigma \in \mathbb{S}^p\) and \(G \in \mathcal{G}(\sigma)\), consider the structural equation model for the random vector \(\mathsf{X} = (\mathsf{X}_1, \ldots, \mathsf{X}_p)\)
\begin{align*}
    \mathsf{X}_j = \sum_{i \in \Pa_j(G)} B_{ij} \,\mathsf{X}_{i} + \varepsilon_j, \quad \varepsilon_j \mid \omega \overset{\text{i.i.d.}}{\sim} \mathcal{N}(0, \omega) \quad \text{for } j = 1, \ldots, p,
\end{align*}
where \(\Pa_j(G) \subseteq \Pred_j(\sigma)\) for each \(j\). Here, $\Pred_j(\sigma) = \{i \in [p]: \sigma^{-1}(i) < \sigma^{-1}(j) \}$ denotes the set of predecessors of node $j$ under the ordering $\sigma$. The following prior on the parameters \((\sigma, G, B, \omega)\) is used, where \(\pi_0\) denotes the prior density function
\begin{align*}
    B_{\Pa_j(G),j} \mid G, \omega & \overset{\text{ind}}{\sim} \mathcal{N}_{|\Pa_j(G)|} \left( 
\widehat{B}_{\Pa_j(G),j}, \frac{\omega}{\gamma} \left( X_{\Pa_j(G)}^\top X_{\Pa_j(G)} \right)^{-1}
\right), \quad \forall j \in [p], \\ 
\pi_0(\omega \mid \sigma) & \propto \omega^{-\frac{\kappa}{2} - 1}, \\
\pi_0(G, \sigma) & \propto \left(p^{c_0} \right)^{-|G|} \ind_{\{\widehat{G}_\sigma\}}(G),
\end{align*}
where \(\widehat{B}_{\Pa_j(G),j}\) is the least-squares estimator of \(B_{\Pa_j(G),j}\), and \(c_0, \gamma, \kappa\) are hyperparameters of the prior. The posterior distribution of $(G, \sigma)$ is given by
\begin{align*}
    \pi_n(G, \sigma) & \propto 
    \,  \pi_0(G, \sigma) \int \pi_0(B, \omega \mid G, \sigma) \, L(B, \omega)^\alpha \, d(B, \omega)\\
    & = e^{\phi(G)} \, \bm{1}_{\{\widehat{G}_\sigma\}}(G),
\end{align*}
where the $\alpha$-likelihood function $L(B, \omega)^\alpha$ with $\alpha \in (0,1)$ is used to offset the influence of the data under the empirical prior. We refer to $\phi(G)$ as the score of $G$, which is given by
\begin{align*}
    \phi(G) = -|G| (c_0 \log p 
+ 0.5\log[(1 + \alpha / \gamma)]) 
- \frac{\alpha p n + \kappa}{2} 
\log \left( \sum_{j=1}^{p} X_j^\top \Phi_{\mathrm{Pa}_j(G)}^\perp X_j
 \right),
\end{align*}
where $\Phi_S^\perp = I - X_S(X_S^\top X_S)^{-1} X_S$. Among the possible candidates for $\widehat{G}_\sigma$, the maximum a posteriori (MAP) estimator is selected as 
$\widehat{G}^{\mathrm{MAP}}_\sigma= \arg\max_{G \in \mathcal{G}_{d_{\text{in}}}(\sigma)} \phi(G)$, where $\mathcal{G}_{d_{\text{in}}}(\sigma) = \{G \in \mathcal{G}(\sigma): |\Pa_j(G)| \leq d_{\text{in}} \text{ for } j \in [p] \}$ denotes the set of DAGs consistent with ordering $\sigma$, subject to an in-degree constraint $d_{\text{in}}$. 

Let $G^*$ be the true underlying DAG and $B^*$ be a coefficient matrix such that each element $B_{ij}^*$ is nonzero if and only if there is an edge $i \rightarrow j$ in $G^*$. We say that $B^*$ is consistent with $G^*$. Let $[\sigma^*] \subseteq \bbS^p$ denote the set of orderings consistent with the true DAG $G^*$. Observe that $[\sigma^*]$ is nonempty due to acyclicity. Let $X \in \bbR^{n \times p}$ denote the data matrix, where each row is an independent copy of a random vector $\sX$. The distribution of $\sX$ is defined by the structural equation model  $\sX = (B^*)^\T \sX + \varepsilon$, where $\varepsilon \sim \mathrm{MVN}_p (0, \Omega^*)$, and $ \Omega^*$ is a positive diagonal matrix whose entries are weakly increasing with respect to the causal ordering. Let $\Sigma^* = \Sigma(B^*, \Omega^*)$ denote the covariance matrix of $\sX$.

\noindent
\textbf{Assumptions for high-dimensional analysis.}
We introduce
\[
d^* = \max_{(B', \Omega') \in \mathcal{D}(\Sigma^*)} \max_{j \in [p]} \{ |\Pa_j(G')|: B' \text{ is consistent with } G' \},
\]
which represents the maximum in-degree among all DAGs consistent with $\Sigma^*$. Here is the list of assumptions. \vspace{1mm}

(C1) (Restricted eigenvalue condition) There exist \(\underline{\nu}, \overline{\nu} > 0\) and a universal constant \(\delta > 0\) such that any eigenvalues of $\Sigma^*$ are bounded between $\underline{\nu}(1 - \delta)^{-2} $ and $ \overline{\nu}(1 + \delta)^{-2}$.

    (C2) (Hyperparameters) Assume \(\max \{ d^* , \max_j |\Pa_j(G^*)|\} \leq d_{\text{in}}\) where the sparsity parameter \(d_{\text{in}}\) satisfies \(d_{\text{in}} \log p = o(n)\), and prior parameters satisfy that \(\kappa \leq np\), \(0 \leq \alpha/\gamma \leq p^2 - 1\), \(c_0 > \rho(\alpha + 1) \max_{i \neq j}(\omega_j^*/\omega_i^*)\), and \(\rho > 4d_{\text{in}} + 6\).

    (C3) (Beta-min condition) $\min\{ |(B^*)_{ij}|^2 : (B^*)_{ij} \neq 0 \} \geq 16 c_0 \overline{\nu}^2 \log p / (\alpha \underline{\nu}^2 n).$

\begin{proof}[Proof of Corollary~\ref{cor:woody}]
We directly apply the result of Theorem~1 in \cite{chang2024order} by verifying Assumption~A and B therein. Assumption~A is satisfied by the gap condition in~\eqref{eq:gap}. Proposition~1 in \cite{chang2024order} ensures $\widehat{G}^{\mathrm{MAP}}_\sigma = G^*$ for $\sigma \in [\sigma^*]$ for sufficiently large $n$, with probability at least $1-4p^{-1}$ under conditions (C1)-(C3), therefore, Assumption~B is satisfied. Lastly, checking $d^* \leq \din$ and $\din \log p = o(n)$, as given by (C2), completes the proof.
\end{proof}

\newpage
\section{Algorithms}

\subsection{Local operators}\label{subsec:operators}

We formally define the types of local neighborhoods presented in the main text. 
 Let \((\cdot)_c\) denote an ordering in the cycle notation; for example, \(\mu = (a, b, c)_c\) is the ordering given by \(\mu(a) = b, \mu(b) = c, \mu(c) = a\) and \(\mu(k) = k\) for every \(k \notin \{a, b, c\}\). Let \(\circ\) denote the composition of two orderings; that is, \(\tau = \sigma \circ \mu\) is defined by \(\tau(i) = \sigma(\mu(i))\). 
Define the ADJ and RTS operators as
\begin{align*}
    \mathrm{ADJ}(\sigma, i) & = \sigma  \circ (i, i + 1)_c, \text{ for } i \in [p-1],\\
    \mathrm{RTS}(\sigma, i, j) & = \sigma  \circ (i, j)_c, \text{ for } i \neq j,  \,\, i, j \in [p], 
\end{align*}
respectively. See an example in Fig.~\ref{fig:adjrts}.
\begin{figure}[h!]
    \centering
    \centering
    \includegraphics[width=0.49\linewidth]{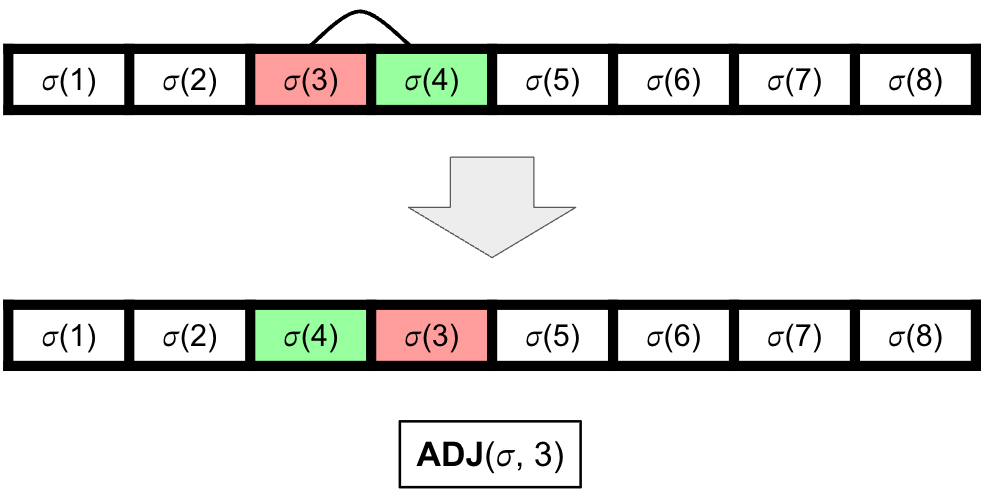}
    \includegraphics[width=0.49\linewidth]{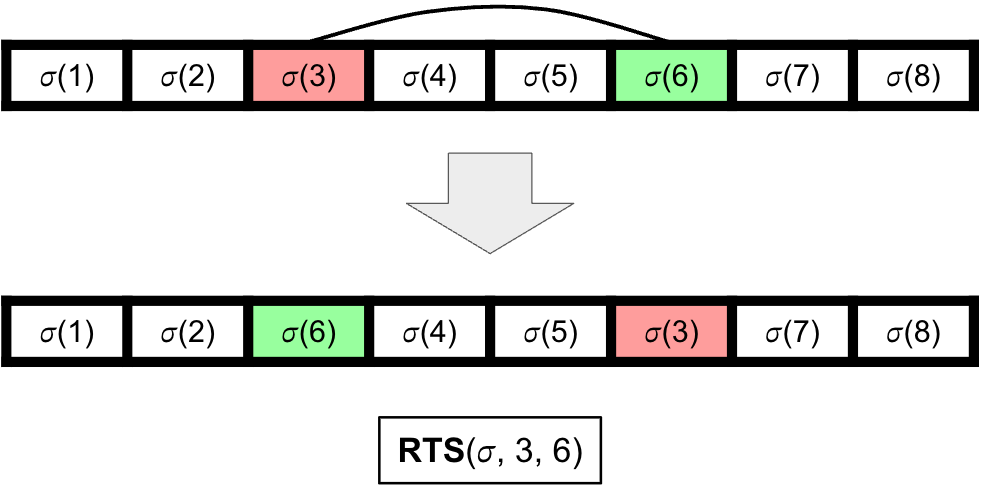}
    \caption{Example of ADJ($\sigma$, 3) and RTS($\sigma$, 3, 6): ADJ($\sigma$, 3) swaps the 3rd variable $\sigma(3)$ (in red) and the next (4th) variable $\sigma(4)$ (in green); RTS($\sigma$, 3, 6) interchanges the 3th variable $\sigma(3)$ (in green)  and the 6th variable $\sigma(6)$ (in green).}
    \label{fig:adjrts}
\end{figure}

For $i < j,  \,\, i, j \in [p]$, the R2R and R2R-REV operators are defined as
\begin{align*}
    \mathrm{R2R}(\sigma, i, j) & = \sigma  \circ  (i, i + 1, \ldots, j)_c,  \\
    \text{R2R-REV}(\sigma, i, j) & = \sigma  \circ (i, j, j-1, \ldots, i + 1)_c,
\end{align*}
respectively. See an example in Fig.~\ref{fig:r2r}. We note that the $insertion$ operator in~\cite{scanagatta2017improved} is the union of R2R and R2R-REV operator.
\begin{figure}[h!] \includegraphics[width=0.49\linewidth]{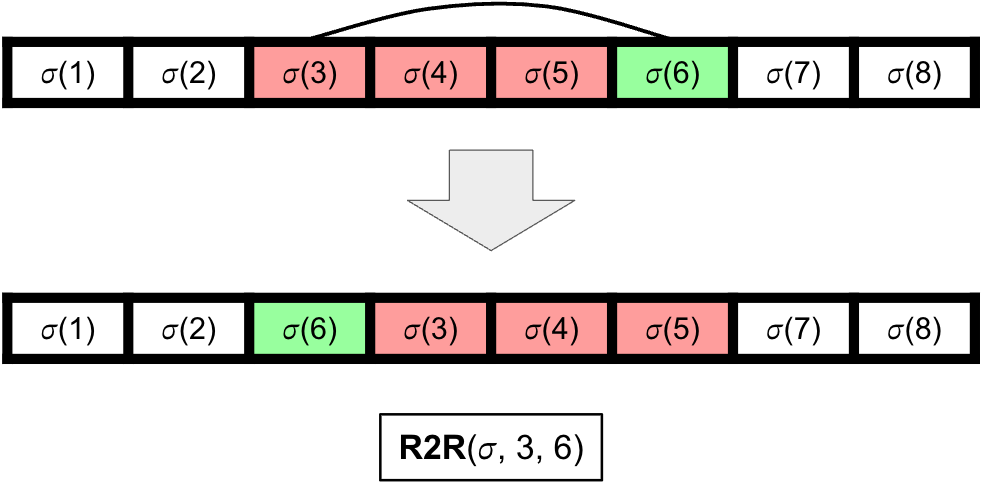}
    \includegraphics[width=0.49\linewidth]{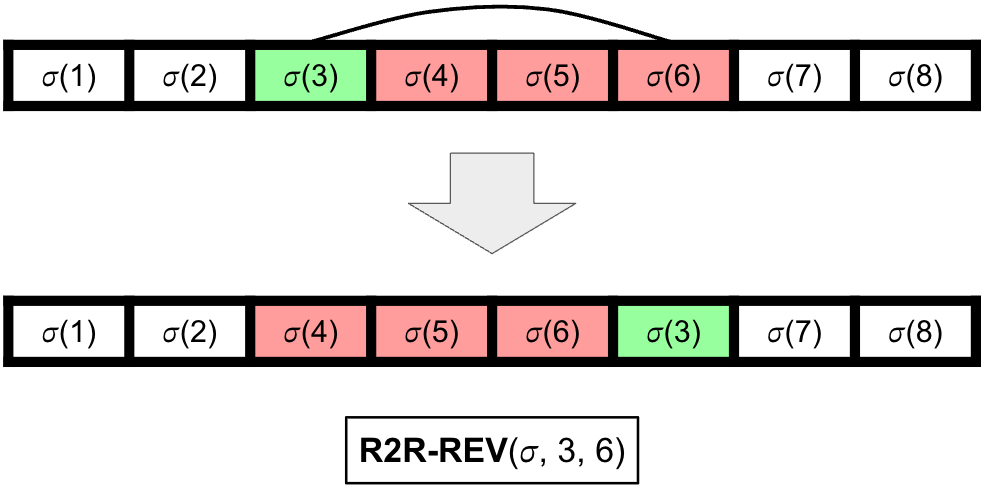}
    \caption{Example of R2R($\sigma$, 3, 6) and R2R-REV($\sigma$, 3, 6): R2R($\sigma$, 3, 6) inserts  the 6th variable $\sigma(6)$ (in green) into the 3rd position of the ordering $\sigma$; R2R-REV($\sigma$, 3, 6) inserts the 3th variable $\sigma(3)$ (in green) into the 6rd position of the ordering $\sigma$.}
    \label{fig:r2r}
\end{figure}

\subsection{Algorithm used in Section~\ref{subsec:complexity}}

For the finite sample algorithm, we define a score $\phi$ on $\sigma \in \bbS^p$ by using the model specification in Section~\ref{sec:woody}, which is given by 
\begin{align*}
    \phi(\sigma) = -|\widehat{G}^{\mathrm{MAP}}_\sigma| (c_0 \log p 
+ 0.5\log[(1 + \alpha / \gamma)]) 
- \frac{\alpha p n + \kappa}{2} 
\log \left( \sum_{j=1}^{p} X_j^\top \Phi_{\mathrm{Pa}_j(\widehat{G}^{\mathrm{MAP}}_\sigma)}^\perp X_j
 \right),
\end{align*}
where we use $c_0 = 3$, $\alpha = 0.99$, $\gamma = 0.01$, and $\kappa = 0$ for the hyperparameters. We outline the finite-sample hill climbing algorithm with R2R neighborhood in Algorithm~\ref{alg:full}.

\begin{algorithm}[h!]
\caption{Hill climbing algorithm with R2R neighborhood with data $X$} \label{alg:full}
\KwInput{An initial ordering $\sigma$,  data $X$, a R2R neighborhood $\mathcal{N}_{\mathrm{R2R}}$, and a score $\phi$}
\While{TRUE}
{ 
$\tau \leftarrow \arg\max_{\sigma' \in \mathcal{N}_{\mathrm{R2R}} (\sigma)} \phi(\sigma')$ \\
\If{$\phi(\tau) > \phi(\sigma)$}{\vspace{1mm}
    $\sigma \leftarrow \tau$
}\Else{
    \textbf{Break}
}
\vspace{1mm}
}
\KwOutput{An estimated DAG $\widehat{G}_\sigma$ given ordering $\sigma$.}
\end{algorithm}

\end{document}